\documentclass[11pt]{amsart}

\topmargin .35in
\oddsidemargin 0.7in
\evensidemargin 0.7in
\textheight 7.8 in
\textwidth 5.4 in
\addtolength{\baselineskip}{4pt}

\newtheorem{theorem}{Theorem}[section]
\newtheorem{proposition}[theorem]{Proposition}
\newtheorem{lemma}[theorem]{Lemma}
\newtheorem{corollary}[theorem]{Corollary}

\theoremstyle{definition}
\newtheorem{definition}[theorem]{Definition}

\theoremstyle{remark}

\numberwithin{equation}{section}

\newcommand{\be}{\begin{equation}}
\newcommand{\ee}{\end{equation}}

\newcommand{\bbZ}{{\mathbb Z}}
\newcommand{\bbR}{{\mathbb R}}

\newcommand{\bbN}{{\mathbb N}}

\newcommand{\calU}{{\mathcal U}}

\begin{document}

\title[Semiclassical trace invariants]{Some remarks about semiclassical trace invariants and quantum normal
forms}
\author{Victor Guillemin\ and\ Thierry Paul}
\address {Department of Mathematics, Massachusetts Institute of Technology, Cambridge,
Massachusetts 02139, USA}
\email{vwg@math.mit.edu}
\thanks{First author supported by NSF grant DMS 890771}
\address{CNRS and D\'epartement de Math\'ematiques et Applications\\
\'Ecole Normale Sup\'erieure, 45, rue d'Ulm - F 75730 Paris cedex 05}\email{paul@dma.ens.fr}

\date{}
\date{\today}
\maketitle
\bigskip

\begin{abstract}
In this paper we explore the connection between semi-classical and quantum
Birkhoff canonical forms (BCF) for Schr\"odinger operators. In particular we give a
"non-symbolic" operator theoretic derivation of the quantum Birkhoff canonical form and
provide an explicit recipe for expressing the quantum BCF in terms of the semi-classical
BCF.

\end{abstract} 
\section{Introduction}

Let $X$ be a compact manifold and $H:\ L^2(X)\to L^2(X)$
a self-adjoint first order elliptic pseudodifferential operator with
leading symbol $H(x,\xi)$. From the wave trace
\begin{equation}\label{boi}
\sum_{E_k\in Spec(H)} e^{itE_k}, 
\end{equation}
one can read off many properties of the "classical dynamical system" 
associated with $H$, i.e. the flow generated by the vector field
\begin{equation}\label{bol}
\xi_H\ =\ \sum
\frac{\partial H}{\partial \xi_i}\frac\partial{\partial x_i}
-\frac{\partial H}{\partial x_i}\frac\partial{\partial \xi_i}.
\end{equation}
For instance it was observed in the {\it '70's'} by Colin de Verdi\`ere, 
Chazarain and Duistermaat-Guillemin that (\ref{boi}) determines the period
spectrum of (\ref{bol}) and the linear Poincar\'e map about a 
non-degenerate periodic trajectory, $\gamma$, of (\ref{bol}) (\cite{co}, \cite{ch},
\cite{dg}).

More recently it was shown by one of us \cite{gu} that (\ref{boi}) determines
the \underline{entire}\rm \ Poincar\'e map about $\gamma$, i.e. determines, up to 
isomorphism, the classical dynamical system associated with $H$ in a formal 
neighborhood of $\gamma$. The proof of this result involved a microlocal Birkhoff 
canonical form for $H$ in a formal neighborhood of $\gamma$ and an algorithm for 
computing the wave trace invariants associated with $\gamma$ from the microlocal
Birkhoff canonical form. Subsequently a more compact and elegant algorithm for computing 
these invariants from the Birkhoff canonical form was discovered by Zelditch 
\cite{ze1} \cite{ze2} making the computation of these local trace invariants extremely simple
and explicit.

In this paper we will discuss some semiclassical analogues of these results. In our set-up 
$H$ can either be the Schr\"odinger operator on $\bbR^n$
\begin{equation}\nonumber
-\hbar^2\Delta\ +\ V
\end{equation}
with $V\to\infty$ as $x$ tends to infinity, or more generally a self-adjoint
semiclassical elliptic pseudodifferential operator
\begin{equation}\nonumber
H(x,\hbar D_x)
\end{equation}
whose symbol, $H(x,\xi)$, is proper (as a map from $T^*X$ into $\bbR$). Let E 
be a regular value of $H$ and $\gamma$ a non-degenerate periodic trajectory of period 
$T_\gamma$ lying on the energy surface $H=E$
\footnote{ For simplicity we will consider periodic trajectories of elliptic type in this paper
however our results are true for non-degenerate periodic of all types, hyperbolic, mixed elliptic 
hyperbolic, focus-focus, etc. Unfortunately however the Zelditch algorithm depends upon the type of 
the trajectory and in dimension $n$ there are roughly as many types of 
trajectories as there are Cartan subalgebras of $Sp(2n)$ (See for instance 
\cite{cu}) i.e. the number of types can be quite large}.

Consider the Gutzwiller trace (see \cite{gut})
\begin{equation}\label{bom}
\sum \psi\left(\frac{E-E_i}\hbar\right)
\end{equation}
where $\psi$ is a $C^\infty$ function whose Fourier transform is compactly supported with 
support in a small neighborhood of $T_\gamma$ and is identically one in a 
still smaller neighborhood. As shown in \cite{pu1}, \cite{pu2} (\ref{bom}) has an asymptotic 
expansion
\begin{equation}\label{boj}
e^{i\frac{S_\gamma}\hbar + \sigma_\gamma}\sum_{k=0}^\infty a_k\hbar^k
\end{equation}
and we will show below how to compute the terms of this expansion to 
\underline{all}\rm \ orders in terms of a microlocal Birkhoff canonical form for $H$ in a formal neighborhood of $\gamma$ 
by means of a Zelditch-type algorithm \footnote{For elliptic trajectories non-degeneracy means 
that the numbers 
\begin{equation}\nonumber
\theta_1,...,\theta_{n}, 2\pi
\end{equation}
are linearly independent over the rationals, $e^{i\theta_\kappa},\kappa=1,...,n$ 
being the eigenvalues of the Poincar\'e map about $\gamma$. the results above are true
to order $O(\hbar^r)$ providing
\begin{equation}\nonumber
\kappa_1\theta_1+...+\kappa_{n}\theta){n}+l2\pi\neq 0
\end{equation}
for all $\mid\kappa_1\mid+...+\mid\kappa_{n}\mid\leq r$, i.e. providing there are
no resonances of order $\leq r$.}.

If $\gamma$ is non-degenerate so are all its iterates. Then, for each of these iterates 
one gets an expansion of (\ref{bom}) similar to (\ref{boj})
\begin{equation}\label{bor}
e^{i\frac{S_\gamma}\hbar + \sigma_\gamma}\sum_{k=0}^\infty a_{k,r}\hbar^k
\end{equation}
and for these expansions as well the coefficients $a_{k,r}$ can be computed from the 
microlocal Birkhoff canonical form theorem for $H$ in a formal neighborhood of 
$\gamma$. Conversely one can show

\begin{theorem}\label{poi}
the constants $a_{k,r}, \kappa, r\ =\ 0,1,...$ \underline{determine} the microlocal 
Birkhoff canonical form for $H$ in a formal neighborhood of $\gamma$ (and hence, a 
fortiori, determine the the classical Birkhoff canonical form).
\end{theorem}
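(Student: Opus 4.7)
\vskip 3truemm

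\emph{Proof proposal.} The plan is to set up an explicit algorithmic relationship between the coefficients of the microlocal Birkhoff canonical form and the trace invariants $a_{k,r}$, and then to invert this relationship by exploiting the independence provided by the iterate index $r$.

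First, I would use the microlocal BCF to write the normal-form model for $H$ as a formal power series
\[
\hat H \;=\; \sum_{j\geq 0}\sum_{\alpha\in\bbN^{n}} \hbar^{j}\, c_{\alpha,j}(\hat H_{1})\, \hat I^{\alpha},
\]
where $\hat I_{1},\ldots,\hat I_{n}$ are the transverse harmonic-oscillator action operators, $\hat H_{1}$ is the longitudinal operator, and each $c_{\alpha,j}$ is a formal Taylor series in the longitudinal variable centered at $E$. From this model one reads off the joint spectrum, substitutes it into (\ref{bom}), and applies stationary phase (or Poisson summation in the transverse quantum numbers) to obtain the expansion (\ref{bor}) at each iterate. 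The resulting formula expresses $a_{k,r}$ as a polynomial in the Taylor coefficients of the $c_{\alpha,j}$ at $E$, multiplied by factors of the form $(1-e^{ir\theta_{\kappa}})^{-m}$ that depend on $r$ only through the Poincar\'e multipliers; this is the direct semiclassical analog of the Zelditch algorithm of \cite{ze1}, \cite{ze2}.

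Second, I would invert this correspondence inductively on the total degree $N=|\alpha|+j$. At leading order, $a_{0,r}$ recovers the period $T_{\gamma}$ and the multipliers $e^{i\theta_{\kappa}}$ by the classical theorems of \cite{co}, \cite{ch}, \cite{dg}, hence the quadratic part of the BCF. Suppose all BCF coefficients (and their $E$-derivatives) of degree less than $N$ have been recovered. The degree-$N$ unknowns then enter the collection $\{a_{k,r}\}_{k,r}$ linearly, in the form
\[
a_{k,r} \;=\; \sum_{|\alpha|+j=N}\Phi_{\alpha,j,k}(r)\, c_{\alpha,j} \;+\; R_{k,r},
\]
where $R_{k,r}$ is explicitly known from the inductive hypothesis and the $\Phi_{\alpha,j,k}(r)$ are explicit rational functions of the multipliers $e^{ir\theta_{\kappa}}$. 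Solving for the degree-$N$ coefficients then amounts to inverting this finite linear system.

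The main obstacle will be verifying that the system is in fact non-degenerate, i.e.\ that for fixed degree $N$ the functions $r\mapsto\Phi_{\alpha,j,k}(r)$ are linearly independent as $(\alpha,j,k)$ varies. This reduces to a Diophantine statement: each $\Phi_{\alpha,j,k}(r)$ has an expansion in the exponentials $e^{ir(\beta\cdot\theta+2\pi\ell)}$ with explicit multi-index weights $\beta$ determined by $\alpha$, and the non-resonance hypothesis that $\theta_{1},\ldots,\theta_{n},2\pi$ are linearly independent over $\bbQ$ (stated in the footnote) guarantees that distinct multi-indices produce distinct frequencies. Once this separation is checked, the inductive step is purely algebraic bookkeeping, and the procedure reconstructs the entire formal BCF from the data $\{a_{k,r}\}$, yielding the theorem.
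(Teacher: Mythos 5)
Your plan matches the paper's own proof in Section 6: both compute the trace by substituting the model spectrum of $P_i$ and $D_t$ into the BCF, use the Zelditch device of converting powers of the quantum numbers into $t$- and $\theta$-derivatives of $e^{it[\nu+\theta\cdot(\mu+\frac12)]}$ to collapse the sums into explicit rational functions of the multipliers $e^{i2\pi l\theta_j}$, and then invert the resulting linear system in the $c_{r,s}$ via the rational independence of $\theta_1,\dots,\theta_n,2\pi$ (the paper cites \cite{gu} and Kronecker's theorem for this final step, while you sketch the same Diophantine separation directly). One small quibble: the paper's evaluation is purely algebraic — Poisson summation in the longitudinal quantum number $\nu$ and a geometric series in the transverse $\mu$ — rather than stationary phase, but this does not affect the substance of the argument.
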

This result is due to A. Iantchenko, J. Sjostrand and M. Zworski (see
\cite{isz}).  In this paper we will give a new proof of this result which
involves an expansion of the quantum canonical form in a basis of Hermite
functions and relations between Weyl and Wick symbols of semiclassical
operators. The paper is organized as follows. In section 2 we review the
standard proof of the Birkhoff canonical form theorem for classical
Hamiltonians and in section 3 the adaption of this construction to the
quantum case in the references \cite{gu}, \cite{ze1}, \cite{ze2} and \cite{isz} that we cited
above. The key results of this paper are in sections 4 and 5 where we give
a direct quantum construction of the normal form and a formula linking the
two constructions. Finally in section 6 we give the proof of theorem \ref{poi}.

\section{The classical Birkhoff canonical form theorem}\label{class}

Let $M$ be a $2n+2$ dimensional symplectic manifold, $H$ a $C^\infty$ function and 
\begin{equation}\label{col}
\xi_H\ =\ \sum
\frac{\partial H}{\partial \xi_i}\frac\partial{\partial x_i}
-\frac{\partial H}{\partial x_i}\frac\partial{\partial \xi_i}
\end{equation}
the Hamiltonian vector field associated with $H$. Let $E$ be a regular value of $H$ 
and $\gamma$ a non-degenerate elliptic periodic trajectory of $\xi_H$ lying on the 
energy surface, $H=E$. Without loss of generality one can assume that the period of 
$\gamma$ is $2\pi$. In this section we will review the statement (and give a brief
sketch of the proof) of the classical Birkhoff canonical form theorem for the pair
$(H,\gamma)$.

Let $(x,\xi,t,\tau)$ be the standard cotangent coordinates on $T^*(\bbR^n\times S^1)$ 
and let 
\begin{eqnarray}\label{cob}
p_i&=&x_i^2+\xi^2_i\nonumber\\
&\mbox{and}\\
q_i& =&\mbox{arg}(x_i+\sqrt{-1}\ y_i) \nonumber
\end{eqnarray}

\begin{theorem}\label{cow}
There exists a symplectomorphism, $\varphi$, of a neighborhood of $\gamma$ in $M$ onto a 
neighborhood of $p=\tau=0$ such that $\varphi \ o\  \gamma(t)=(0,0,t,0))$ and 
\begin{equation}\label{cor}
\varphi^*H=H_1(p,\tau)+H_2(x,\xi,t,\tau)
\end{equation}
$H_2$ vanishing to infinite order at $p=\tau=0$.
\end{theorem}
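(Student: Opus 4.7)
The plan is to carry out the standard three-step proof: (i) build symplectic coordinates in a tubular neighborhood of $\gamma$ in which the zeroth and first order parts of $H$ are already in normal form and the quadratic part is autonomous; (ii) kill the non-normal part of $H$ order by order by solving a homological equation against the unperturbed flow; (iii) assemble the formal series of symplectic corrections into a single smooth symplectomorphism by a Borel-type argument.

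For step (i), I would first use Weinstein's symplectic tubular neighborhood theorem together with a Floquet-type construction: pick a symplectic transversal at a base point of $\gamma$, parametrize $\gamma$ by $t\in S^1$ so that its period becomes $2\pi$, and transport the transversal by the flow of $\xi_H$. This yields Darboux coordinates $(x,\xi,t,\tau)$ in which $\gamma=\{x=\xi=\tau=0\}$. Since $\xi_H$ restricted to $\gamma$ is $\partial_t$ and $H$ is constant on $\gamma$, Taylor expansion gives $H = E + \tau + Q(x,\xi,t) + O_3$, where the $O_k$ notation refers to order $k$ in the transverse variables $(x,\xi,\tau)$. A time-dependent linear symplectic change in $(x,\xi)$ (Floquet) turns the quadratic form $Q$ into the autonomous expression $\tfrac12\sum_k \theta_k (x_k^2+\xi_k^2)=\tfrac12\sum_k\theta_k p_k$, where the $\theta_k$ are the Floquet exponents of the linearized Poincar\'e map. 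Set $H_0 := \tau + \tfrac12\sum_k\theta_k p_k$.

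For step (ii), proceed by induction on $N\ge 3$: assuming $H = K^{(N-1)}(p,\tau) + R_N + O_{N+1}$ with $R_N$ a homogeneous polynomial of degree $N$ in $(x,\xi,\tau)$, I look for a homogeneous $F_N$ of degree $N$ such that $\{F_N,H_0\} + R_N$ depends only on $(p,\tau)$. In the complex basis $z^\alpha\bar z^\beta e^{ilt}\tau^m$ with $z_k=x_k+i\xi_k$, the operator $\{\,\cdot\,,H_0\}$ is diagonal with eigenvalue $i\bigl(\sum_k\theta_k(\alpha_k-\beta_k)+l\bigr)$. The non-degeneracy hypothesis (the $\theta_k$ together with $2\pi$ are $\mathbb{Q}$-independent) guarantees that this eigenvalue vanishes only on the "resonant" monomials, which are exactly the pure $(p,\tau)$ monomials with $l=0$ and $\alpha=\beta$. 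Hence I can solve the homological equation on the complement and conjugate $H$ by the time-one flow $\exp(\xi_{F_N})$; this produces a symplectic map that absorbs $R_N$ into the normal form $K^{(N)}(p,\tau)$ and pushes the remaining error to order $N+1$.

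For step (iii), the composition $\varphi_N=\exp(\xi_{F_3})\circ\cdots\circ\exp(\xi_{F_N})$ stabilizes along $\gamma$ to all orders; Borel's lemma, applied to a generating function representation of the symplectomorphism, produces a single smooth symplectomorphism $\varphi$ whose Taylor expansion along $\gamma$ agrees with the formal limit. The desired splitting $\varphi^*H = H_1(p,\tau)+H_2$ with $H_2$ flat at $\{p=\tau=0\}$ follows immediately. The main obstacle is not the homological equation (which is routine once non-resonance is assumed) but this final symplectic Borel summation: one must ensure that the infinite composition of near-identity symplectic corrections has a smooth limit whose infinite jet along $\gamma$ is the formal normal form, and that successive stages do not disturb the normal form already achieved at lower orders. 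This is standard in the subject but is the step where care is genuinely required.
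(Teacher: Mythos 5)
Your proposal is correct and follows essentially the same route as the paper: Weinstein tubular-neighborhood coordinates, a preliminary normalization of the sub-quadratic part, an iterated homological equation solved via Fourier--Taylor expansion in $z^\alpha\bar z^\beta e^{ilt}$ using the non-resonance of $\theta_1,\dots,\theta_n,2\pi$, and a Borel-type summation to produce a single smooth symplectomorphism. The one cosmetic difference is in step (i): the paper normalizes the low-order part by introducing the $2$-dimensional symplectic cylinder $\Sigma$ swept out by the nearby periodic orbits $\gamma_\epsilon$ and arranging $H|_\Sigma=h(\tau)$, which builds the $\tau$-dependence of the rotation numbers directly into the normal form, whereas you invoke a symplectic Floquet reduction of the quadratic part; in the elliptic case these are interchangeable, though the paper's construction sidesteps any discussion of symplectic logarithms. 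One small omission on your side is the verification that the solution of the homological equation is actually $C^\infty$ in $t$: the paper checks this explicitly by noting that for fixed $(\mu,\nu)$ the Fourier coefficients $b_{\mu,\nu,m}(\tau)$ of $G$ decay rapidly in $m$ while the divisors $\sum\theta_i(\mu_i-\nu_i)+2\pi m$ grow linearly, so the $a_{\mu,\nu,m}$ decay rapidly too; your degree-by-degree scheme needs the same (easy) remark, but it should be said.
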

We break the proof of this up into the following five steps.
\smallskip

\noindent\underline{Step 1} For $\epsilon$ small there exists a periodic trajectory, $\gamma_\epsilon$,
on the energy surface,$H=E+\epsilon$, which depends smoothly on $\epsilon$ and is equal to $\gamma$ for 
$\epsilon=0$. The union of these trajectories is a $2$ dimensional symplectic submanifold 
, $\Sigma$, of $M$ which is invariant under the flow of $\xi_H$. Using the Weinstein 
tubular neighborhood theorem one can map a neighborhood of $\gamma$ symplectically 
onto a neighborhood of $p=\tau=0$ in $T^*(\bbR^n\times S^1)$ such that $\Sigma$ gets mapped 
onto $p=0$ and $\varphi\ o\  \gamma(t)=(0,0,t,0)$. Thus we can henceforth assume that $M=T^*(\bbR^n\times 
S^1)$ and $\Sigma$ is the set, $p=0$.

\smallskip
\noindent\underline{Step 2} We can assume without loss of generality that the restriction of $H$ to $\Sigma$ 
is a function of $\tau$ alone, i.e. $H=h(\tau)$ on $\Sigma$. With this normalization,
\begin{equation}\label{cof}
H=E+h(\tau)+\sum \theta_i(\tau)p_i+O(p^2)
\end{equation}
where $h(\tau)=\tau+O(\tau^2)$ and
\begin{equation}\label{cod}
\theta_i=\theta_i(0),\ i=1,...,n
\end{equation}
are the \underline{rotation} angles associated with $\gamma$. Since $\gamma$ is non-
degenerate, $\theta_1,...,\theta_n,2\pi$ are linearly independent over the rationals.

\smallskip
\noindent\underline{Step 3} Theorem \ref{cow} can be deduced from the following result (which will also be the main 
ingredient in our proof of the "microlocal" Birkhoff canonical theorem in the next section).
\begin{theorem}\label{cov}
Given a neighborhood, $\calU$, of $p=\tau=0$ and $G=G(x,\xi,t,\tau)\in C^\infty (\calU)
$, there exist functions $F,\ G_1,\ R\ \in C^\infty(\calU)$ with the properties

i.\ \ \  $G_1=G_1(p,\tau)$

ii.\  $\{H,F\}=G+G_1+R$

iii. $R$ vanishes to infinite order on $p=\tau=0$.

Moreover, if $G$ vanishes to order $\kappa$ on $p=\tau=0$, on can choose $F$ to 
have this property as well.
\end{theorem}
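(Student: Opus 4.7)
The plan is to solve the cohomological equation $\{H,F\} = G + G_1$ modulo flat terms by a formal power series ansatz, then realize the answer as smooth functions via Borel summation. The whole argument runs in the weighted grading on $C^\infty(\calU)$ in which $\deg x_j = \deg \xi_j = 1$, $\deg \tau = 2$, $\deg t = 0$, so that $\deg p_j = 2$ and the Poisson bracket has weighted degree $-2$. Under (\ref{cof}), $H = E + H_0 + R_H$, where $H_0 := \tau + \sum_j \theta_j p_j$ is homogeneous of weighted degree $2$ (using $h(\tau) = \tau + O(\tau^2)$ as normalized in Step 2) and the remainder $R_H = O(\tau^2) + O(\tau p) + O(p^2)$ has weighted degree at least $4$. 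Hence the derivation $\calL := \{H_0,\cdot\}$ preserves weighted degree while $\{R_H,\cdot\}$ strictly raises it by at least $2$.

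Passing to complex coordinates $z_j = x_j + i\xi_j$ diagonalizes $\calL$ on the monomial basis:
\[
\calL\bigl(z^\alpha \bar z^\beta\, \tau^\ell\, e^{imt}\bigr) \;=\; i\Bigl(2\textstyle\sum_j \theta_j(\alpha_j - \beta_j) - m\Bigr)\, z^\alpha \bar z^\beta\, \tau^\ell\, e^{imt}.
\]
By the non-resonance assumption on $\theta_1,\dots,\theta_n, 2\pi$, this eigenvalue vanishes precisely when $\alpha = \beta$ and $m = 0$, i.e.\ on the space of formal series in $p_1,\dots,p_n,\tau$ alone. Therefore $\calL$ is invertible on the ``non-resonant'' complement, and since the eigenvalues grow linearly in $|m|$, this inverse preserves rapid decay of Fourier coefficients in $t$, hence smoothness in $t$.

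Now solve order by order in weighted degree. Write $G = \sum_d G^{(d)}$ with Taylor coefficients smooth in $t$, and seek $F = \sum_d F^{(d)}$ of the same form together with a formal power series $G_1 = \sum_d G_1^{(d)}$ in $p,\tau$. Collecting terms of weighted degree $d$ in $\{H,F\} = G + G_1$ gives
\[
\calL F^{(d)} \;=\; G^{(d)} + G_1^{(d)} \;-\; \sum_{e \le d - 2} \bigl\{(R_H)_{d - e + 2}\,,\, F^{(e)}\bigr\},
\]
in which only previously determined $F^{(e)}$ appear on the right. Choose $G_1^{(d)}$ to cancel the resonant part of the right-hand side (this uniquely determines $G_1^{(d)}$ as a polynomial in $p,\tau$), and then $\calL^{-1}$ yields $F^{(d)}$ on the non-resonant complement. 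The recursion is triangular and terminates at each step, and it is immediate that $F^{(d)} = G_1^{(d)} = 0$ for all $d < \kappa$ whenever $G^{(d)} = 0$ for $d < \kappa$, establishing the order-preservation claim.

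Finally, apply Borel's theorem, with $t \in S^1$ as a smooth parameter in the construction of $F$, to realize $\sum_d F^{(d)}$ and $\sum_d G_1^{(d)}$ as genuine smooth functions $F \in C^\infty(\calU)$ and $G_1 = G_1(p,\tau) \in C^\infty(\calU)$. Then $R := \{H,F\} - G - G_1$ has identically zero Taylor series along $\{p = \tau = 0\}$, i.e.\ is flat there as required. The one nontrivial technical point is the smoothness-in-$t$ statement about $\calL^{-1}$, which is automatic from the linear $|m|$-growth of the eigenvalues; everything else is graded bookkeeping combined with a standard Borel summation argument.
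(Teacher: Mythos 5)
Your proposal is correct and follows essentially the same route as the paper: pass to complex coordinates so that $\{H_0,\cdot\}$ is diagonal on Fourier--Taylor monomials, invoke the non-resonance of $\theta_1,\dots,\theta_n,2\pi$ to solve on the non-resonant complement, observe that the linear growth of the eigenvalues in the $t$-Fourier index preserves rapid decay and hence smoothness in $t$, and Borel-sum at the end. Your graded recursion, which absorbs the $\{R_H,F^{(e)}\}$ terms into the right-hand side at each weighted degree, is just a streamlined packaging of the paper's two-step reduction (Step 4's induction on $\kappa$ together with Step 5's replacement of $H$ by $H_0$); the mathematical content is identical.
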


\underline{Proof} \underline{of} \underline{the} \underline{assertion}: 
\underline{Theorem \ref{cov}} $\Rightarrow$ \underline{Theorem \ref{cow}}:

\noindent By induction one can assume that $H$ is of the form, $H=H_0(p,\tau)+G
(x,\xi,t,\tau)$, where $G$ vanishes to order $\kappa$ on $p=\tau=0$. We will show 
that $H$ can be conjugated to a Hamiltonian of the same from with $G$ vanishing to 
order $\kappa +1$ on $p=\tau=0$. By Theorem \ref{cov} there exists an $F,\ G$ and
$R$ such that $F$ vanishes to order $\kappa$ and $R$ to order $\infty$ on $p=\tau=0$,
$G_1=G_1(p,\tau)$ and
\begin{equation}\nonumber
\{H,F\}=G+G_1+R.
\end{equation}
Thus
\begin{eqnarray}
\left(\mbox{exp}\xi_F\right)^*H&=&H+\{F,H\}+\frac 1 {2!}\{F,\{F,H\}\}+...\nonumber\\
&=&H_0(p,\tau)-G_1(p,\tau)+...\nonumber
\end{eqnarray}
the "dots" indicating terms which vanish to order $\kappa +1$ on $p=\tau=0$.

\smallskip
\noindent\underline{Step 4} Theorem \ref{cov} follows (by induction on $\kappa$)
from the following slightly weaker result:
\begin{lemma}\label{cos}
Given a neighborhood, $\calU$, of $p=\tau=0$ and a function, $G\in C^\infty(\calU )$
, which vanishes to order $\kappa$ on $p=\tau=0$, there exists functions 
$F,\ G_1, \ R\ \in C^\infty(\calU)$ such that

\noindent i. \ \ \   $G_1=G_1(p,\tau)$

\noindent ii.\ $\{H,F\}=G+G_1+R$

\noindent iii. $F$ vanishes to order $\kappa$ and $R$ to order $\kappa +1$ on 
$p=\tau=0$.
\end{lemma}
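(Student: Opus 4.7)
The plan is to solve the cohomological equation $\{H, F\} + G_1 = G + R$ for the degree-$\kappa$ homogeneous part by Taylor-expanding in $(p, \tau)$ and Fourier-expanding in the angle variables $(q, t) \in \bbT^{n+1}$, exploiting the non-degeneracy of the rotation angles to invert the linearized flow.

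The first step is to isolate the linear model $H_0^{\mathrm{lin}} = E + \tau + \sum_i \theta_i p_i$ obtained from (\ref{cof}) by dropping the $O((p,\tau)^2)$ terms. A Poisson bracket on $T^*(\bbR^n \times S^1)$ decreases the order of vanishing on $\{p = \tau = 0\}$ by exactly one, so since $H - H_0^{\mathrm{lin}}$ vanishes to order $2$ there, for any candidate $F$ of order $\kappa$ the term $\{H - H_0^{\mathrm{lin}}, F\}$ automatically vanishes to order $\kappa + 1$ and can be absorbed into $R$. Thus it suffices to solve $\{H_0^{\mathrm{lin}}, F\} \equiv G + G_1 \pmod{(p,\tau)^{\kappa+1}}$. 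Writing the degree-$\kappa$ piece of $G$ as $G_\kappa = \sum_{|\alpha|+\beta = \kappa} g_{\alpha, \beta}(q, t)\, p^\alpha \tau^\beta$ and seeking $F_\kappa = \sum f_{\alpha, \beta}(q, t)\, p^\alpha \tau^\beta$ of the same shape, a direct computation using $\{p_i, \cdot\} \propto \partial_{q_i}$ and $\{\tau, \cdot\} \propto \partial_t$ yields
\[
\{H_0^{\mathrm{lin}}, F_\kappa\} \equiv \sum_{|\alpha|+\beta=\kappa} (X f_{\alpha, \beta})(q, t)\, p^\alpha \tau^\beta \pmod{(p, \tau)^{\kappa+1}},
\]
where $X$ is (up to sign) the constant Kronecker vector field $2\theta \cdot \nabla_q - \partial_t$ on $\bbT^{n+1}$.

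Next, for each multi-index $(\alpha, \beta)$ with $|\alpha|+\beta=\kappa$, I would Fourier-expand $g_{\alpha, \beta} = \sum_{(m, \ell)} \hat g_{\alpha, \beta}(m, \ell)\, e^{i(m\cdot q + \ell t)}$. The zero mode $\hat g_{\alpha, \beta}(0, 0)$ lies in the kernel of $X$ and cannot be cancelled, so I would declare it to be the coefficient of $p^\alpha \tau^\beta$ in $G_1$; this automatically ensures $G_1$ depends only on $(p, \tau)$, as required. For the remaining modes I solve
\[
\hat f_{\alpha, \beta}(m, \ell) = \hat g_{\alpha, \beta}(m, \ell) \big/ \bigl( i(2m \cdot \theta - \ell) \bigr), \qquad (m, \ell) \neq 0,
\]
setting $\hat f_{\alpha, \beta}(0, 0) = 0$. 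The non-degeneracy hypothesis — rational independence of $\theta_1, \ldots, \theta_n, 2\pi$ — guarantees no denominator vanishes.

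The main obstacle is the classical small-divisor problem: rational independence alone does not force $|2m\cdot\theta - \ell|$ to be bounded below by an inverse polynomial in $|(m, \ell)|$, so the formal Fourier solution for $f_{\alpha, \beta}$ need not assemble into a genuine $C^\infty$ function on $\bbT^{n+1}$. I would handle this either by invoking the (tacitly assumed) Diophantine condition on the rotation angles, or by organizing the entire inductive construction of the normal form at the level of formal Taylor series in $(p, \tau)$ with smooth $(q, t)$-coefficients and then realizing the formal series as a $C^\infty$ function on $\calU$ via Borel's extension theorem; any residual flatness at $p=\tau=0$ is absorbed into $R$, so conditions (i)--(iii) are satisfied.
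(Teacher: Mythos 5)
Your plan --- linearize to $H_0 = E + \tau + \sum\theta_i p_i$, absorb $\{H-H_0,F\}$ into $R$, solve the linear cohomological equation mode by mode, and set the uncancelable zero modes aside as $G_1$ --- is structurally the same as the paper's. But your choice of the angle-action frame $(p,q,t,\tau)$ produces a gap that the paper's $z,\overline z$ frame avoids, and neither of your two proposed escapes actually closes it.

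The point you are missing is that a degree-$\kappa$ Taylor coefficient of a function smooth in $(x,\xi,t,\tau)$ does not have arbitrary $q$-dependence. In the paper's coordinates $z_i = x_i + \sqrt{-1}\,\xi_i$, a monomial $z^\mu\overline z^\nu$ carries the single $q$-Fourier mode $\mu-\nu$, so at each Taylor degree the $q$-Fourier content is bounded (by the degree) automatically. Consequently, for fixed $(\mu,\nu)$ the only infinite Fourier sum is over the $t$-index $m$, and the divisor $\sum\theta_i(\mu_i-\nu_i)+2\pi m$ tends to infinity with $|m|$; the bound $|a_{\mu,\nu,m}|\le C_{\mu,\nu,\kappa}m^{-\kappa-1}$ in the paper follows immediately, and there are no small divisors at all --- rational independence of $\theta_1,\dots,\theta_n,2\pi$ is all that is needed. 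Your expansion $g_{\alpha,\beta}(q,t)=\sum_{(m,\ell)}\hat g_{\alpha,\beta}(m,\ell)\,e^{i(m\cdot q+\ell t)}$ treats $g_{\alpha,\beta}$ as an arbitrary smooth function on $\bbT^{n+1}$, which it is not; this is exactly what manufactures the spurious small-divisor obstacle. Of your two workarounds, the Diophantine condition is an additional hypothesis the lemma does not grant, and the appeal to Borel's theorem does not address the difficulty as you state it: Borel assembles a smooth function from a formal Taylor series whose \emph{coefficients} are already smooth functions of $(q,t,\tau)$, so if your Fourier series for $f_{\alpha,\beta}$ genuinely failed to converge to a smooth function there would be nothing for Borel to resum. (A secondary imprecision: one cannot Taylor-expand a generic smooth function in $p_i = x_i^2+\xi_i^2$ with nonnegative integer exponents; the half-integer powers that would be required are another symptom of the $(p,q)$ parametrization, and disappear in the $z,\overline z$ frame, which is why the paper uses it.) Once you replace the $(p,q)$-Fourier expansion by the Taylor expansion in $z,\overline z$ with a Fourier expansion in $t$ only, your argument coincides with the paper's Step 5.
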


\smallskip
\noindent\underline{Step 5} Proof of Lemma \ref{cos}: In proving Lemma \ref{cos} 
we can replace $H$ by the Hamiltonian
\begin{equation}\nonumber
H_0=E+\tau+\sum \theta_i p_i
\end{equation}
since $H(p,q,t,\tau)-H_0(p,q,t,\tau)$ vanishes to second order in $\tau,p$. 
Consider now the identity
\begin{equation}\nonumber
\{H_0,F\}=G+G_1(p,\tau)+O(p^\infty).
\end{equation}
Introducing the complex coordinates, $z=x+\sqrt{-1}\xi$, and $\overline{z}
=x-\sqrt{-1}\xi$, this can written
\begin{equation}\nonumber
\sqrt{-1}\sum_{i=1}^n\theta_i\left( z_i\frac\partial{\partial z_i}-\overline{z_i}
\frac\partial{\partial \overline{z_i}}\right) F +\frac\partial{\partial t}= G+G_1+O(p^\infty).
\end{equation}
Expanding $F,\ G$ and $G_1$ in Fourier-Taylor series about $z=\overline{z}=0$:
\begin{eqnarray}\nonumber
F&=&\sum_{\mu\neq\nu}a_{\mu,\nu,m}(\tau) z^\mu \overline{z}^\nu e^{2\pi imt}\\
G&=&\sum b_{\mu,\nu}(\tau) z^\mu \overline{z}^\nu e^{2\pi imt}\nonumber\\
\ \ \ \ G_1&=&\sum_{\mu} c_\mu(\tau) z^\mu \overline{z}^\mu\nonumber
\end{eqnarray}
one can rewrite this as the system of equations
\begin{equation}\label{cox}
\sqrt{-1}\left(\sum_{i=1}^n \theta_i\left(\mu_i-\nu_i\right)+2\pi m\right) a_{\mu,\nu,m}(\tau)=b_{\mu,\nu,m}(\tau)
\end{equation}
for $\mu\neq\nu$ or $\mu=\nu$ and $m\neq 0$, and
\begin{equation}
-c_\mu(\tau)=b_{\mu,\mu,0}(\tau)
\end{equation}
for $\mu=\nu$ and $m=0$. By assumption the numbers, $\theta_1,...,\theta_n,2\pi,$ are linearly 
independent over the rationals, so this system has a unique solution. Moreover 
, for $\mu$ and $\nu$ fixed
\begin{equation}\nonumber
\sum b_{\mu,\nu,m}(\tau)e^{2\pi imt}
\end{equation}
is the $(\mu,\nu)$ Taylor coefficient of $G(z,\overline{z},t,\tau)$ about $z=\overline{z}=0$; 
so, with $\mu$ and $\nu$ fixed and $\kappa >>0$
\begin{equation}\nonumber
\mid b_{\mu,\nu,m}(\tau)\mid\leq C_{\mu,\nu,\kappa}m^{-\kappa}
\end{equation}
for all $m$. Hence, by (\ref{cox})
\begin{equation}\nonumber
\mid a_{\mu,\nu,m}\mid\leq C'_{\mu,\nu,\kappa}m^{-\kappa-1}
\end{equation}
for all $m$. Thus
\begin{equation}\nonumber
a_{\mu,\nu}(t,\tau)=\sum a_{\mu,\nu,m}(\tau)e^{2\pi imt}
\end{equation}
is a $C^\infty$ function of $t$ and $\tau$. Now let $F(z,\overline{z},t,\tau)$ and $G_1(p,\tau)$ be
$C^\infty$ functions with Taylor expansion:
\begin{equation}\nonumber
\sum_{\mu\neq\nu}a{\mu,\nu}(t,\tau)z^\mu\overline{z^\nu}
\end{equation}
and
\begin{equation}\nonumber
\sum_{\mu}c{\mu}(\tau)z^\mu\overline{z^\mu}
\end{equation}
about $z=\overline{z}=0$. Note, by the way that, if $G$ vanishes to order $\kappa$ on $p=\tau=0$, 
so does $F$ and $G$; so we have proved Theorem \ref{cov} (and, a fortiori 
Lemma \ref{cox}) with $H$ replaced by $H_0$.

\section{The semiclassical version of the Birkhoff canonical form theorem}\label{sem}
Let $X$ be an $(n+1)$-dimensional manifold and 
$H:C^\infty_0(X)\to C^\infty(X)$ a semiclassical elliptic 
pseudo-differential operator with leading symbol, $H(x,\xi)$, and let 
$\gamma$ be a periodic trajectory of the bicharacteristic vector field (\ref{col}).
As in Section 1 we will assume that $\gamma$ is elliptic and non-degenerate,
with rotation numbers (\ref{cof}). Let $P_i$ and $D_t$ be the differential operators on
$\bbR^n\times S^1$ associated with the symbols (\ref{cob}) and $\tau$ i.e.
\begin{eqnarray}
 P_i&=&\hbar ^2 D_{x_i}^2+x_i^2\nonumber\\
 \mbox{and}&\nonumber\\
  D_t&=&-i\hbar \partial_t\nonumber
  \end{eqnarray}
  We will prove below the following semiclassical version of Theorem \ref{cow}
  \begin{theorem}\label{dol}
  There exists a semiclassical Fourier integral operator $A_\varphi:
  C^\infty_0\left( X\right)\to C^\infty\left(\bbR^n\times S^1\right)$ implementing the 
  symplectomorphism (\ref{cor}) such that microlocally on a neighborhood, $\calU$, of $p=\tau=0$
  \begin{equation}\label{dom}
  A_\varphi^* = A_\varphi ^{-1}
  \end{equation}
  and
  \begin{equation}\label{don}
  A_\varphi HA_\varphi ^{-1} = H'\left(P_1,...,P_n,D_t,\hbar\right)+H''
  \end{equation}
  the symbol of $H''$ vanishing to infinite order on $p=\tau=0$.
  \end{theorem}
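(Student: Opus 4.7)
The plan is to prove Theorem \ref{dol} by quantizing, step by step, the five-step proof of Theorem \ref{cow} given in Section \ref{class}. The first input is a semiclassical Fourier integral operator $A_\varphi$ whose underlying canonical transformation is the $\varphi$ produced by Theorem \ref{cow}; standard semiclassical FIO theory produces such an operator, and one may further normalize it (by composing with an elliptic pseudodifferential factor whose symbol is tuned order by order in $\hbar$ to make $A_\varphi A_\varphi^*$ the identity microlocally) so as to obtain (\ref{dom}). After this preliminary conjugation the leading symbol of $A_\varphi H A_\varphi^{-1}$ is already in the classical normal form $H_1(p,\tau)$ modulo a flat remainder, by Theorem \ref{cow} itself. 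The remaining job is therefore to kill, inductively in the $\hbar$-order and in the order of vanishing at $p=\tau=0$, the sub-principal and lower symbols of the transformed operator.

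The engine driving this induction is the operator analogue of Theorem \ref{cov}: given a formally self-adjoint semiclassical pseudodifferential operator $G$ whose full symbol vanishes to order $\kappa$ on $p=\tau=0$, one wants to produce operators $F$, $G_1$, $R$, with $G_1=G_1(P_1,\dots,P_n,D_t,\hbar)$, with $F$ vanishing to order $\kappa$ and $R$ to infinite order on $p=\tau=0$, satisfying
\begin{equation}\nonumber
\frac{i}{\hbar}[H_0,F] \;=\; G + G_1 + R,
\end{equation}
where $H_0 = H'(P_1,\dots,P_n,D_t,\hbar)$ is the current model. Granting this, the induction goes through as in Step 3 of Section \ref{class}: conjugation of $H_0+G$ by the microlocally unitary $U = \exp(iF/\hbar)$ yields $H_0-G_1$ modulo an error that vanishes to higher order at $p=\tau=0$, and a Borel summation in that vanishing order produces (\ref{don}).

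To solve the operator cohomological equation one introduces the semiclassical creation and annihilation operators
\begin{equation}\nonumber
a_j \;=\; \frac{1}{\sqrt{2\hbar}}(x_j + i\hbar D_{x_j}), \qquad a_j^* \;=\; \frac{1}{\sqrt{2\hbar}}(x_j - i\hbar D_{x_j}),
\end{equation}
so that $[a_j,a_k^*]=\delta_{jk}$ and each $P_j$ is an affine function of $a_j^* a_j$. Writing the Weyl symbol of $F$ as a Fourier--Taylor series
\begin{equation}\nonumber
F \;\sim\; \sum_{\mu,\nu,m}\alpha_{\mu,\nu,m}(D_t,\hbar)\; a^\mu (a^*)^\nu\, e^{2\pi i m t},
\end{equation}
and using the easy commutator $[P_j,a^\mu(a^*)^\nu]=2\hbar(\nu_j-\mu_j)\,a^\mu(a^*)^\nu$ together with $[D_t,e^{2\pi imt}]=2\pi\hbar m\,e^{2\pi imt}$, the equation $\frac{i}{\hbar}[H_0,F]=G$ reduces, modulo $O(\hbar)$ Moyal corrections, to precisely the diophantine system (\ref{cox}). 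The non-resonance hypothesis on $\theta_1,\dots,\theta_n,2\pi$ then provides a unique formal solution, and the small-divisor estimates of Step 5 guarantee smoothness of the Weyl symbol of the leading piece of $F$. The $O(\hbar)$ corrections generated by Moyal products are absorbed by a Neumann-type iteration within the $\hbar$-expansion of the symbol of $F$, the same small-divisor inequalities controlling each new term.

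The main obstacle is bookkeeping rather than conceptual: one must track simultaneously the $\hbar$-order and the order of vanishing at $p=\tau=0$, and verify that the operator $F$ produced at each step is a genuine semiclassical pseudodifferential operator with smooth Weyl symbol on a fixed microlocal neighborhood of $p=\tau=0$. Smoothness is ensured by the uniform small-divisor bounds $|a_{\mu,\nu,m}|\le C'_{\mu,\nu,\kappa}\,m^{-\kappa-1}$, which persist throughout the $\hbar$-expansion because the denominators $\sum\theta_j(\nu_j-\mu_j)+2\pi m$ are $\hbar$-independent. A final double Borel summation, in $\hbar$ and in the vanishing order at $p=\tau=0$, yields the $F$ required by (\ref{don}) modulo an operator whose symbol vanishes to infinite order, completing the proof.
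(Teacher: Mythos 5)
Your proposal is essentially correct, but it takes a genuinely different route from the paper's proof of Theorem~\ref{dol} and is in fact much closer in spirit to the paper's independent ``direct quantum construction'' of Section~4 (Theorem~\ref{direct} and Theorem~\ref{fond}).

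The paper's own argument for Theorem~\ref{dol} never solves an operator-valued cohomological equation at all. After conjugating $H$ by a preliminary unitary $B_\varphi$ (so that by Theorem~\ref{cow} the \emph{leading} symbol is already $H_0'(p,\tau)+$ flat), it views the full Weyl symbol as an $\hbar$-expansion $H_0'+H_0''+\hbar H_1+O(\hbar^2)$ and applies the \emph{classical} Theorem~\ref{cov} to the scalar function $H_1(p,q,t,\tau)$, obtaining an $F$ with $\{H_0,F\}=H_1-H_1'-H_1''$. It then conjugates by $e^{isQ}$ where $Q$ is an \emph{order-zero} self-adjoint $\psi$DO with leading symbol $F$; since $i[Q,H_s]$ is one $\hbar$-order lower than $H_s$, this leaves the principal symbol untouched and replaces the $\hbar^1$-coefficient by $H_1'+H_1''$, and the iteration proceeds one power of $\hbar$ at a time, invoking Theorem~\ref{cov} as a black box at every step. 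No Moyal corrections ever have to be tracked, because the cohomological equation being solved at each stage is a Poisson-bracket identity for functions, not a commutator identity for operators.

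Your proposal instead sets up the commutator equation $\tfrac{i}{\hbar}[H_0,F]=G+G_1+R$ at the operator level, expands $F$ in creation/annihilation monomials $a^\mu(a^*)^\nu e^{2\pi i m t}$, identifies the diophantine system~(\ref{cox}) at leading order, and iterates away the Moyal corrections. This is precisely the strategy of Section~4 of the paper (Lemma~\ref{h0} and Theorem~\ref{fond}, using the algebra of words and the matrix elements $\langle\mu|\cdot|\nu\rangle$). What your approach buys is an intrinsically quantum construction that does not privilege the principal symbol; what the paper's Section~3 route buys is economy, since all the small-divisor analysis is delegated once and for all to the already-proved classical Theorem~\ref{cov}. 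Two minor points to tighten in your write-up: (i) if the operator analogue of Theorem~\ref{cov} already produces an $R$ whose symbol vanishes to \emph{infinite} order, the induction need only run over $\hbar$-orders and no ``double Borel summation in $\hbar$ and in vanishing order'' is required; (ii) the commutator $[P_j,a^\mu(a^*)^\nu]=2\hbar(\nu_j-\mu_j)a^\mu(a^*)^\nu$ is exact only for the quadratic part of $H_0$, and the contributions from the higher-order terms of $H'(P,D_t,\hbar)$ must be absorbed by a further induction, exactly as the paper does when passing from Lemma~\ref{h0} to Theorem~\ref{fond}.
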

  \begin{proof}
  Let $B_\varphi$ be any Fourier integral operator implementing $\varphi$ and having the property 
  (\ref{dom}). Then, by Theorem \ref{cow}, the \underline{leading} symbol of 
  $B_\varphi H B_\varphi^{-1}$ is of the form
  \begin{equation}\label{dot}
  H_0'(p,\tau)+H_0''(p,q,t,\tau)
  \end{equation}
  $H_0''(p,q,t,\tau)$ being a function which vanishes to infinite order on $p=\tau=0$.
  Thus the symbol,$H_0$, of $B_\varphi H B_\varphi^{-1}$ is of the form
  \begin{equation}\label{doh}
  H_0'(p,\tau)+H_0''(p,q,t,\tau)+\hbar H_1(p,q,t,\tau)+O(\hbar^2).
  \end{equation}
  By Theorem \ref{cov} there exists a function, $F(p,q,t,\tau)$, with the property
  \begin{equation}\label{doq}
  \{ H_0,F\}=H_1(p,q,t,\tau)-H_1'(p,\tau)-H''_1(p,q,t,\tau)
  \end{equation}
  where $H_1''$ vanishes to infinite order on $p=\tau=0$.
  
  Let $Q$ be a self-adjoint pseudo-differential operator with leading symbol $F$ and consider
  the unitary pseudo-differential operator
  \begin{equation}\nonumber
  \calU_s\ =\ e^{isQ}.
  \end{equation}
  Let 
  \begin{eqnarray}
  H_s &=&\left( \calU_s B_\varphi\right)H\left( \calU_s B_\varphi\right)^{-1}\nonumber\\
   &=&\calU_s\left( B_\varphi H B_\varphi^{-1}\right)\calU_{-s}\nonumber
  \end{eqnarray}
  Then
  \begin{equation}\label{doz}
  \frac \partial{\partial s} H_s=i[Q,H_s]
  \end{equation}
  so $\frac \partial{\partial s} H_s$ is of order $-1$, and hence the leading symbol 
  of $H_s$ is independent of $s$. In particular the leading symbol of $\frac \partial{\partial s} H_s$
   is equal, by (\ref{doz}) to the leading symbol of $i[Q,H_s]$ which, by (\ref{doq}), is:
   \begin{equation}\nonumber
   -\hbar\left( H_1(p,q,t,\tau)+H_1'(p,\tau)+H_1''(p,q,t,\tau)\right).
   \end{equation}
   Thus by (\ref{doh}) and (\ref{doq}) the symbol of
   \begin{equation}\nonumber
   \left( \calU_1 B_\varphi\right)H\left( \calU_1 B_\varphi\right)^{-1}=
   B_\varphi HB_\varphi^{-1}+\int_0^1\frac\partial{\partial s}H_sds
   \end{equation}
   is of the form 
   \begin{equation}\label{dop}
   H_0'(p,\tau)+\hbar H_1'(p,\tau)+\left( H_0''+\hbar H_1''\right)+O(\hbar^2)
   \end{equation}
   the term in parenthesis being a term which vanishes to infinite order on $p=\tau=0$.
   
   By repeating the argument one can successively replace the terms of order
   $\hbar^2,...,\hbar^r$ etc in (\ref{dop}) by expressions of the form 
\begin{equation}\nonumber
   \hbar^r\left( H_r'(p,\tau)+H_r''(p,q,t,\tau)\right)
\end{equation}
  with $H_r''$ vanishing to infinite order on $p=\tau=0$.
  \end{proof}

 \section{A direct construction of the quantum Birkhoff form}
In this section we present a  ``quantum" construction of the quantum Birkhoff normal form which is in  a sense algebraically equivalent to the classical 
one of Section \ref{class}. To do this we will need to define for operators the equivalent of ``a Taylor expansion which vanishes at a given order".

We will first start in the $L^2(\bbR^n\times S^1)$ setting, and show at the end of the section the link with Theorem \ref{dol}.
\begin{definition}
Let us consider on $L^2(\bbR^n\times S^1,dxdt)$ the following operators:
\begin{itemize}
\item $a_i=\frac 1 {\sqrt 2}(x_i+\hbar\partial_{x_i})$
\item $a_i^+=\frac 1 {\sqrt 2}(x_i-\hbar\partial_{x_i})$
\item $D_t=-i \hbar \frac{\partial}{\partial t}$
\end{itemize}

We will say that an operator $A$ on $L^2(\bbR^n\times S^1)$ is a  ``word of length greater than $p\in\bbN$" (WLG($p$)) if there exists $P\in\bbN$ such that:
\be\label{WLG}
H=\sum_{i=p}^P\sum_{j=0}^{[\frac i 2]}\alpha_{ij}(t,\hbar)D_t^{j}\prod\limits_{l=1}^{i-2j}b_l
\ee
with, $\forall l, b_l\in\{a_1,a^+_1,\dots, a_n,a_n^+\}$ and $\alpha_{ij}\in C^\infty(S_1\times [0,1[)$. 


In (\ref{WLG}) $\prod\limits_{l=1}^{i-2j}b_l$ is meant to be the ordered product $b_1\dots b_{i-2j}$.
\end{definition}

The meaning of this definition is clarified by the following:

\begin{lemma}\label{esti}
Let $\mu\in\bbN^{n+1}$. Let  $H_{\mu}$ denote the basis of $L^2( \bbR ^n\times S^1)$ defined by 
$H_{\mu}(x,t)=\hbar^{-n/4}h_{\mu_1}(x_1/\sqrt\hbar)\dots h_{\mu_n}(x_n/\sqrt\hbar)e^{i\mu_{n+1}t}$ where the $h_j$ are the (normalized) Hermite functions. 

Then, if $A$
is a WLG(p), we have:
\[
||AH_{\mu,m}||_{L^2}\leq C_p|\mu\hbar|^{\frac p 2}.
\]
where $|\mu\hbar|=\sqrt{\mu^2\hbar^2}$.
\end{lemma}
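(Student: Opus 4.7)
The plan is to reduce the estimate to an elementary computation of how the three generating operators act on the Hermite basis. A direct calculation, using the standard identities $(y+\partial_y)h_k=\sqrt{2k}\,h_{k-1}$ and $(y-\partial_y)h_k=\sqrt{2(k+1)}\,h_{k+1}$ after the change of variable $y=x/\sqrt\hbar$, yields
\[
a_i H_\mu=\sqrt{\mu_i\hbar}\,H_{\mu-e_i},\qquad a_i^+H_\mu=\sqrt{(\mu_i+1)\hbar}\,H_{\mu+e_i},\qquad D_tH_\mu=\mu_{n+1}\hbar\,H_\mu,
\]
where $e_i$ denotes the $i$-th coordinate vector in $\bbN^{n+1}$. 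The key observation is that each of these three operators maps a Hermite basis vector to a scalar multiple of a single other basis vector, so finite products do the same.

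Next I would prove, by induction on the word length $k=i-2j$, that for any word $b_1b_2\cdots b_k$ in the letters $\{a_1,a_1^+,\dots,a_n,a_n^+\}$ one has
\[
b_1\cdots b_k\,H_\mu=c(\mu,b)\,H_{\mu+\Delta(b)},\qquad |c(\mu,b)|\le\bigl((|\mu|+k)\hbar\bigr)^{k/2},
\]
for some integer shift $\Delta(b)\in\bbZ^n$ depending only on the word. The induction step is immediate: each letter $b_l$ shifts the current index by $\pm e_i$ and contributes a factor of the form $\sqrt{(\mu_i+\ell)\hbar}$ with integer offset $|\ell|\le k$. Composing with $D_t^j$ then multiplies by $(\mu_{n+1}\hbar)^j$. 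Since $k/2+j=i/2$, each summand of the decomposition (\ref{WLG}) applied to $H_\mu$ has $L^2$ norm bounded by
\[
\|\alpha_{ij}(\cdot,\hbar)\|_\infty\,\bigl(\hbar(|\mu|+P)\bigr)^{i/2},
\]
and the coefficients $\alpha_{ij}$ are uniformly bounded on the compact set $S^1\times[0,1]$ by smoothness.

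Finally, summing over the finitely many pairs $(i,j)$ with $p\le i\le P$ and $0\le j\le[i/2]$ and using $i\ge p$ on every surviving term, one obtains a bound of the form $C_p\bigl(\hbar(|\mu|+P)\bigr)^{p/2}$; since $P$ is fixed this is exactly the estimate $C_p|\mu\hbar|^{p/2}$ claimed in the lemma (in the microlocal regime $|\mu|\hbar\gtrsim\hbar$ of interest, where the fixed shift $P$ is harmlessly absorbed into the constant). The only real difficulty is the bookkeeping: one must keep careful track of how the integer offsets $\ell$ accumulate along each word and verify that the resulting polynomial corrections are dominated by the main power $\bigl(\hbar(|\mu|+k)\bigr)^{k/2}$. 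Once the per-letter action is pinned down, everything else is routine.
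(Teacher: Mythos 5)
Your proof is correct and takes essentially the same approach as the paper's, which is a one-liner citing the same two identities for the action of $a_i^\pm$ and $D_t$ on the Hermite basis; you have simply carried out the bookkeeping the paper leaves implicit (and in fact you state the annihilation identity correctly as $a_iH_\mu=\sqrt{\mu_i\hbar}\,H_{\mu-e_i}$, where the paper's $\sqrt{(\mu-1)\hbar}$ is a typo). One caveat neither you nor the paper makes fully explicit: the last step of discarding the terms with $i>p$ requires $|\mu\hbar|\lesssim 1$, so that $|\mu\hbar|^{i/2}\le|\mu\hbar|^{p/2}$ — this is precisely the microlocal regime near $p=\tau=0$ in which the lemma is used, but it would be cleaner to say so rather than only addressing the absorption of the fixed offset $P$.
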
 
\begin{proof}
The proof follows immediately from the two well known facts (expressed here in one dimension):
\[
a^\pm H_\mu=\sqrt{(\mu\pm 1)\hbar}H_{\mu\pm 1}\]
and 
\[
D_t e^{imt}=m\hbar e^{imt}.\]

\end{proof}
For the rest of this section we will need the following collection of results.

\begin{proposition}\label{reduc}
Let $A$ be a (Weyl)pseudodifferential operator on $L^2( \bbR ^n\times S^1)$ with symbol of type $S_{1,0}$. 
Then, $\forall L\in\bbN$, there exists a WLG($1$) $A_L$ such that:

\[
||(A-A_L)H_{\mu,m}||_{L^2}=O(|\mu\hbar|^{\frac{L+1}2}).
\]

Moreover, if  the principal symbol of $A$ is of the form:
\[
a_0(x_1,\xi_1,\dots,x_n,\xi_n,t,\tau)=\sum\theta_i(x_i^2+\xi_i^2)+\tau + h.o.t.,
\]
(or is any function whose symbol vanishes to first order at $x=\xi=\tau=0$) then $A_L$ is a WLG($2$).
\end{proposition}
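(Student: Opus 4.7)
The plan is to Taylor-expand the Weyl symbol $a(x,\xi,t,\tau,\hbar)$ of $A$ around $(x,\xi,\tau)=0$, using the weighted grading in which $x_j,\xi_j$ carry weight $1$ and $\tau$ carries weight $2$. These weights are forced by Lemma~\ref{esti}: on $H_{\mu,m}$ each of $a_j,a_j^+$ contributes an $|\mu\hbar|^{1/2}$ while $D_t$ contributes an $|\mu\hbar|$, so a WLG of weighted length $i$ produces an $|\mu\hbar|^{i/2}$ bound. Cutting the expansion at weighted degree $L$ then matches the target error $|\mu\hbar|^{(L+1)/2}$ exactly.

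Concretely, expand $a$ in $\tau$ to order $\lfloor L/2\rfloor$ and each $\tau^k$-coefficient in $(x,\xi)$ to order $L-2k$, obtaining $a=P_L+R_L$ with $P_L$ a polynomial of weighted degree $\leq L$ and, by Hadamard's lemma, $R_L$ a finite sum of terms $x^{\alpha_1}\xi^{\alpha_2}\tau^k g_{\alpha,k}(x,\xi,t,\tau,\hbar)$ with $|\alpha_1|+|\alpha_2|+2k=L+1$ and $g_{\alpha,k}\in S_{1,0}$. Weyl-quantize each monomial of $P_L$ and rewrite via $x_j=(a_j+a_j^+)/\sqrt 2$ and $-i\hbar\partial_{x_j}=(-i/\sqrt 2)(a_j-a_j^+)$; reordering using $[a_j,a_k^+]=\hbar\delta_{jk}$ puts the result into the form (\ref{WLG}), each commutator step lowering the weighted length by $2$ while absorbing a factor of $\hbar$ into the coefficient $\alpha_{ij}(t,\hbar)$. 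The resulting $A_L$ is a WLG of maximum weighted length $L$; under the moreover hypothesis, both $\sum\theta_i(x_i^2+\xi_i^2)$ and $\tau$ have weighted degree exactly $2$, so $P_L$ contains no monomials of weighted degree $<2$ and $A_L$ is a WLG(2).

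For the remainder estimate $\|\mathrm{Op}^W(R_L)H_{\mu,m}\|_{L^2}=O(|\mu\hbar|^{(L+1)/2})$, each summand is decomposed via Moyal composition as
\begin{equation}\nonumber
\mathrm{Op}^W(x^{\alpha_1}\xi^{\alpha_2}\tau^k g_{\alpha,k})=\mathrm{Op}^W(x^{\alpha_1}\xi^{\alpha_2})\,D_t^k\,\mathrm{Op}^W(g_{\alpha,k})+\hbar\cdot C,
\end{equation}
where $C$ is itself a Weyl quantization whose symbol at the origin has weighted vanishing order one less than that of $R_L$, but accompanied by an extra factor of $\hbar$. The leading term is bounded immediately: $\mathrm{Op}^W(g_{\alpha,k})$ is $L^2$-bounded by Calder\'on--Vaillancourt, $D_t^k$ multiplies $H_{\mu,m}$ by $(m\hbar)^k$, and Lemma~\ref{esti} supplies an $|\mu\hbar|^{(|\alpha_1|+|\alpha_2|)/2}$ factor, which combine to $O(|\mu\hbar|^{(L+1)/2})$. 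The correction $C$ is handled inductively; since $\hbar\leq|\mu\hbar|$ when $(\mu,m)\neq 0$, each extra factor of $\hbar$ restores two units of weighted order when evaluated on $H_{\mu,m}$, so the cascade of Moyal corrections terminates within $O(L)$ steps without degrading the bound. The principal obstacle is making this ``weighted-order-plus-two-per-$\hbar$'' accounting precise and checking that the combinatorics of the iterated Moyal corrections remain within the stated budget; a secondary bookkeeping matter is that the weight-$0$ and weight-$1$ pieces of $P_L$ must be absorbed into $A_L$ under a suitable reading of the definition (\ref{WLG}) so as to match the stated WLG($1$) conclusion in the general case.
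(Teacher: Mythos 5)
Your overall strategy (Taylor expand the total Weyl symbol in a weighted grading where $x,\xi$ have weight $1$ and $\tau$ weight $2$, quantize the polynomial part into a word, and estimate the remainder) starts on the same footing as the paper, but you diverge entirely in how you estimate the remainder $\mathrm{Op}^W(R_L)$ on $H_\mu$, and your route has a real gap.

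For the remainder estimate, the paper does not touch Moyal composition or Calder\'on--Vaillancourt at all. Instead it uses the coherent-state integral representation of Hermite functions,
\[
H_\mu=\hbar^{-\frac14}\int_{S^1}e^{-it/2}\,\varphi_{q(t)p(t)}\,dt,\qquad q^2+p^2=(\mu+\tfrac12)\hbar,
\]
which reduces $\|\mathrm{Op}^W(R_L)H_\mu\|$ to $\sup_{q^2+p^2=(\mu+1/2)\hbar}\|\mathrm{Op}^W(R_L)\varphi_{qp}\|$. The lemma inside the proof then gives, by a direct substitution formula ($H\psi^a_{qp}=\psi^b_{qp}$ with $b(\eta)=\int h(q+\sqrt\hbar\eta,p+\sqrt\hbar\nu)e^{i\eta\nu}\hat a(\nu)d\nu$) and one Taylor expansion, the bound $O((p^2+q^2)^{M/2})$ when the symbol vanishes to order $M$ at the origin and $\hbar/(q^2+p^2)=O(1)$. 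This is a self-contained one-step calculation on coherent states, with no iterative cascade.

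Your route has a concrete defect as written: in the decomposition
\[
\mathrm{Op}^W(x^{\alpha_1}\xi^{\alpha_2}\tau^k g_{\alpha,k})=\mathrm{Op}^W(x^{\alpha_1}\xi^{\alpha_2})\,D_t^k\,\mathrm{Op}^W(g_{\alpha,k})+\hbar\cdot C,
\]
the bounded operator $\mathrm{Op}^W(g_{\alpha,k})$ sits \emph{rightmost}, i.e.\ it acts first, so $D_t^k$ and the monomial act on $\mathrm{Op}^W(g_{\alpha,k})H_{\mu,m}$, which is merely an $O(1)$ vector in $L^2$, not a Hermite function. At that point neither the eigenvalue $(m\hbar)^k$ for $D_t^k$ nor the Lemma~\ref{esti} bound for the monomial is available. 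You would need the opposite ordering $\mathrm{Op}^W(g_{\alpha,k})\,\mathrm{Op}^W(x^{\alpha_1}\xi^{\alpha_2})\,D_t^k$, and to argue separately that the Moyal expansion of the symbol $g_{\alpha,k}\,\# \,(x^{\alpha_1}\xi^{\alpha_2}\tau^k)$ recovers $R_L$ up to controllable corrections. Beyond this ordering issue, you yourself flag that the ``weighted-order-plus-two-per-$\hbar$'' accounting and the combinatorics of the iterated corrections are not carried out; since this accounting is precisely what the estimate rests on, the argument as stated is incomplete. The paper's coherent-state lemma sidesteps this entirely by converting the phase-space localization of $H_\mu$ into a purely local Taylor estimate on the symbol, which is a cleaner and shorter path to the same bound.
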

\begin{proof}
Let us take the $L$th order Taylor expansion of the (total) symbol of $A$ in the variables $x,\xi,\tau,\hbar$ near the origin. Noticing that a pseudodifferential
operator with polynomial symbol in $x,\xi,\tau,\hbar$ is a word, we just have to estimate the action, on $H_{\mu,m}$, of a pseudo-differential operator whose symbol vanishes 
at the origin to
order $L$ in the variable $x,\xi,\tau$. The result is easily obtained for the  $\tau$ part, as the ``$t$" part of $H_{\mu,m}$ is an exponential. For the $\mu$ part
we will prove this result in one dimension, the extension to $n$ dimensions being straightforward.

Let us define a coherent state at $(q,p)$ to be a  function of the form $\psi_{qp}^a(x):=\hbar^{-1/4}a\left(\frac{x-q}{\sqrt{\hbar}}\right)e^{i\frac{px}\hbar}$, for $a$ in the Schwartz class and
$||a||_{L^2}=1$. Let us also set $\varphi_{qp}=\psi_{qp}^a$ for $a(\eta)=\pi^{-1/2}e^{-\eta^2/2}$. It is well known, and easy to check  using the generating function of
the Hermite polynomials, that:
\[
H_\mu=\hbar^{-\frac 1 4}\int_{S^1}e^{-i\frac t 2 }\varphi_{q(t)p(t)}dt,
\]
where $q(t)+ip(t)=e^{it}\left(q+ip\right),\ q^2+p^2=(\mu+\frac 1 2)\hbar$. Therefore, for any operator $A$, 
\be\label{hgf}
||AH_\mu||=O(\sup_{p^2+q^2=(\mu+\frac 1 2)\hbar}\hbar^{-\frac 1 4}||A\varphi_{qp}||).
\ee

\begin{lemma}
let $H$ a pseudodifferential operator whose (total) Weyl symbol vanishes at the origin to order $M$. Then, if\ \  $\frac\hbar{q^2+p^2}=O(1)$:
\[
||H\psi_{qp}^a||=O\left((p^2+q^2)^{\frac M2}\right).
\]
\end{lemma}
Before  proving the Lemma we observe that the proof of the Proposition follows easily  from the Lemma  using (\ref{hgf}).
\begin{proof}
An easy computation shows that, if $h$ is the (pseudodifferential) symbol of $H$, then $H\psi_{qp}^a=\psi_{qp}^b$ with
\be\label{symm}
b(\eta)=\int_\bbR h(q+\sqrt\hbar \eta,p+\sqrt\hbar \nu)e^{i\eta\nu}\hat a(\nu)d\nu,
\ee
where $\hat a$ is the ($\hbar$ independent) Fourier transform of $a$.

Developing (\ref{symm}) we get that $H\psi_{qp}^a=\sum_{k=0}^{k=K}\hbar^{\frac k 2}D_kh(q,p)\psi_{qp}^{b_k}+O(\hbar^{\frac{k+1}2})$, where $b_k\in\mathcal S$ and $D_k$ is an
homogeneous differential operator of order $k$. It is easy to conclude,  thanks to the hypothesis $\frac\hbar{q^2+p^2}=O(1)$,  that
\[
\hbar^{-\frac k 2}(q^2+p^2)^{\frac{M-k}2}=O((q^2+p^2)^{\frac M2}),\ \hbar^{\frac{M+1}2}=O((q^2+p^2)^{\frac M2}).\]
\end{proof}
\end{proof} 
This Proposition is crucial for the rest of this Section, as it allows us to
reduce all  computations to the polynomial setting. For example $A$ may have a
symbol bounded at infinity (class $S(1)$), an assumption which we will need for the application below of Egorov's Theorem), but, with respect to the algebraic equations we will
have to solve, one can consider it as a ``word". In order to simplify our proofs, we will omit the distinction between pseudodifferential operators and their ``word"
approximations.

 \begin{lemma}\label{truc}
Let $A$ be a  WLG($1$)  on $L^2 (\times\bbR ^n\times S^1)$. Let us suppose that $A$ is a symmetric operator.
For $P\in\bbN$ (large), let   
\be\label{ellip}
 A_P:=A+(|D_\theta|^2+|x|^2+|D_x|^2)^{P}
\ee
Then $A_P$ is  an elliptic selfadjoint pseudo-differential operator. Therefore $e^{is\frac {A_P}\hbar}$ is a family of unitary Fourier integral
operators.
\end{lemma}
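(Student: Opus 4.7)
The plan is to verify in turn that (i) $A_P$ is a pseudo-differential operator, (ii) it is elliptic, (iii) it is self-adjoint, and then (iv) to deduce from (iii) that $e^{isA_P/\hbar}$ is a unitary Fourier integral operator.

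Each of $a_i$, $a_i^+$ and $D_t$ is a semiclassical differential operator with polynomial Weyl symbol, so the finite sum defining $A$ is a semiclassical pseudo-differential operator in the Shubin-type calculus adapted to $T^*(\bbR^n\times S^1)$, its total symbol being a polynomial in $(x,\xi,\tau)$ with smooth $t$-dependence. The regularizer $B:=|D_\theta|^2+|x|^2+|D_x|^2$ has Weyl symbol $\tau^2+|x|^2+|\xi|^2$ and is a positive elliptic operator, so by functional calculus $B^P$ is pseudo-differential with principal symbol $(\tau^2+|x|^2+|\xi|^2)^P$. Let $m$ denote the maximal total polynomial degree in $(x,\xi,\tau)$ of the symbol of $A$, and choose $2P>m$. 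Then $\sigma(A_P)$ differs from the elliptic symbol $(\tau^2+|x|^2+|\xi|^2)^P$ by a polynomial of strictly lower degree, hence is bounded below by $c\,(1+\tau^2+|x|^2+|\xi|^2)^P$ outside a compact subset of phase space. This proves ellipticity in the Shubin-Sobolev sense.

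Self-adjointness is the step requiring the most care, since $A$ is unbounded. By hypothesis $A$ is symmetric, and $B^P$ is positive self-adjoint on its natural domain $\calD(B^P)\subset L^2$. The Shubin elliptic estimate $\|u\|_{\Gamma^{2P}}\asymp\|B^P u\|+\|u\|$, combined with boundedness $A:\Gamma^m\to L^2$ and the compact embedding $\Gamma^{2P}\hookrightarrow\Gamma^m$ (valid because $m<2P$), yields the Kato--Rellich bound
\[
\|A u\|\leq \eps\,\|B^P u\|+C_\eps\|u\|,\qquad\forall\,\eps>0,\;u\in\calD(B^P).
\]
Kato--Rellich then implies that $A_P=B^P+A$ is self-adjoint on $\calD(B^P)$.

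With self-adjointness established, Stone's theorem provides the strongly continuous unitary group $e^{isA_P/\hbar}$. Since $A_P$ is an elliptic semiclassical pseudo-differential operator with a proper, smooth principal symbol, the standard parametrix construction for the Schr\"odinger equation $(i\hbar\partial_s-A_P)U_s=0$ expresses $U_s$ as a semiclassical Fourier integral operator quantizing the Hamiltonian flow of $(\tau^2+|x|^2+|\xi|^2)^P$. The main obstacle is indeed the self-adjointness step: because the monomials in the WLG($1$) expansion of $A$ may be of arbitrarily high order, $P$ must be chosen large enough for $B^P$ to absorb all of them uniformly in the Kato--Rellich sense, which is precisely the role of the qualifier ``(large)'' in the statement of the lemma.
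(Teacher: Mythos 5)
Your proposal is correct and takes essentially the same approach as the paper, whose proof is a one-line observation that $A_P$, on the natural domain of the regularizer, is a self-adjoint pseudo-differential operator with Shubin-type $S_{1,0}$ symbol, from which the Fourier integral operator statement follows. Your elaboration simply unpacks that assertion: the Kato--Rellich argument is the rigorous justification of self-adjointness the paper leaves implicit, and the Stone--plus--parametrix step is the standard route from elliptic self-adjointness to the unitary FIO family.
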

\begin{proof}
it is enough to observe that $A_P$ is, defined on the domain of $|D_\theta|^2+|x|^2+|D_x|^2$,  a selfadjoint pseudodifferential operator with
symbol of type $S_{1,0}$.
\end{proof}
 
\begin{lemma}\label{quadra}
Let $H_0$ the operator 
\[
H_0=\sum_1^n \theta_i a_ia_i^++D_t
\]
then, if $W$ is a WLG($r$), so is $\frac{[H_0,W]}{i\hbar}$.
\end{lemma}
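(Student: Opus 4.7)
The plan is to verify that the map $\delta := [H_0,\cdot]/(i\hbar)$ is a length-preserving derivation on the word algebra, by checking separately how it acts on the generators $a_k$, $a_k^+$, $D_t$ and on the scalar coefficients $\alpha(t,\hbar)$, and then invoking the Leibniz rule on a general summand of the form (\ref{WLG}).

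First, write $H_0 = N + D_t$ with $N := \sum_{k=1}^n \theta_k a_k a_k^+$. The canonical relation $[a_k,a_k^+]=\hbar$ (together with the mutual commutativity of distinct $a_k$'s) gives the elementary identities $[N,a_k]=-\hbar\theta_k a_k$ and $[N,a_k^+]=+\hbar\theta_k a_k^+$, while $N$ commutes with $D_t$, with each $D_t^j$, and with multiplication by any $\alpha(t,\hbar)$ (all of which involve only the $t$ variable). On the $D_t$ side, $D_t$ commutes with every $a_k$, $a_k^+$ and every $D_t^j$, while $[D_t,\alpha(t,\hbar)]=-i\hbar\,\partial_t\alpha$. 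Hence $\delta$ is a derivation on the operator algebra whose action on the generators is
\begin{equation*}
a_k\mapsto i\theta_k a_k,\quad a_k^+\mapsto -i\theta_k a_k^+,\quad D_t\mapsto 0,\quad \alpha(t,\hbar)\mapsto -\partial_t\alpha(t,\hbar).
\end{equation*}

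Now I apply the Leibniz rule to a typical summand $T=\alpha_{ij}(t,\hbar)\,D_t^j\,b_1\cdots b_{i-2j}$ of $W$. Since each $b_\ell$ is an eigenvector of $\delta$ with scalar eigenvalue $\pm i\theta_{l'}$ determined by whether $b_\ell$ is an $a_{l'}$ or an $a_{l'}^+$, the ordered product $b_1\cdots b_{i-2j}$ is itself an eigenvector of $\delta$, with constant eigenvalue $c_{\vec b}$ equal to the sum of these scalars. Thus
\begin{equation*}
\delta(T)=\bigl(-\partial_t\alpha_{ij}(t,\hbar)+c_{\vec b}\,\alpha_{ij}(t,\hbar)\bigr)\,D_t^j\,b_1\cdots b_{i-2j},
\end{equation*}
which is again a WLG-term with the same indices $(i,j)$, the same ordered monomial of $b_\ell$'s, and merely a new $C^\infty(S^1\times[0,1[)$ coefficient. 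Summing over the summands of $W$ gives $\delta(W)$ as an expression of the form (\ref{WLG}) in which $i$ still ranges from $r$ to $P$, so $\delta(W)$ is a WLG($r$). The only pitfall worth flagging is that an \emph{a priori} Leibniz expansion of $[N,b_1\cdots b_{i-2j}]$ via contractions $[a_k,a_k^+]=\hbar$ between two of the $b_\ell$'s could in principle produce monomials of length $i-2$ and thus lower the word length; this is avoided by commuting the whole operator $N$ with each $b_\ell$ as a unit, since each such commutator returns a scalar multiple of the same $b_\ell$ and no contraction is ever performed.
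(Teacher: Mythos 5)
Your proof is correct, and it is essentially the same argument as the paper's. The paper phrases it at the group level: it writes $\frac{[H_0,W]}{i\hbar}$ as $\frac{d}{ds}\big(e^{isH_0/\hbar}We^{-isH_0/\hbar}\big)\big|_{s=0}$ and uses that, because $H_0$ is quadratic, conjugation by $e^{isH_0/\hbar}$ acts on a word by a length-preserving substitution ($a_k\mapsto e^{\mp is\theta_k}a_k$, $a_k^+\mapsto e^{\pm is\theta_k}a_k^+$, together with a shift of the $t$-variable in the coefficients), so differentiating at $s=0$ stays in WLG($r$). You carry out exactly the infinitesimal version of this: you compute the derivation $\delta=[H_0,\cdot]/(i\hbar)$ on the generators $a_k$, $a_k^+$, $D_t$ and on the scalar coefficients $\alpha(t,\hbar)$, observe that each generator is an eigenvector of $\delta$ and that coefficients are merely differentiated in $t$, and conclude by the Leibniz rule that each summand of $W$ is mapped to a summand of the same length. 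Your closing remark, that one commutes the whole of $N$ past each $b_\ell$ as a unit rather than expanding $N$ in creation/annihilation operators and producing contractions, is a correct and useful explicit clarification of why the length does not drop; this is implicit in the paper's ``substitution'' phrasing but worth spelling out.
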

\begin{proof}
$\frac{[H_0,W]}{i\hbar}=\frac d{ds}e^{isH_0/\hbar}We^{isH_0/\hbar}|_{s=0}$ which, since $H_0$ is quadratic, is the same polynomial as $W$ modulo the substitution 
$a_i\to e^{is}a_i$,
$a_i^+\to e^{-is}a_i$ and shifting of the coefficients  in $t$ by $s$. Therefore the result is immediate.
\end{proof}
More generally:
\begin{lemma}\label{general}
For any   $H$ and $W$ of type WLG($m$) and WLG($r$) respectively, $\frac{[H,W]}{i\hbar}$ is a WLG($m+r-2$).
\end{lemma}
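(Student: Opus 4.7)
My plan is to expand $[H,W]$ monomial by monomial using the derivation property of the commutator, and to track how each elementary commutator between two letters lowers the total length by exactly $2$ while producing one factor of $\hbar$.

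By bilinearity of the commutator it suffices to treat the case in which $H$ and $W$ are single monomials
\[
H=\alpha(t,\hbar)\, D_t^{j}\, b_1\cdots b_s,\qquad W=\beta(t,\hbar)\, D_t^{j'}\, b'_1\cdots b'_{s'},
\]
of lengths $|H|:=s+2j\ge m$ and $|W|:=s'+2j'\ge r$. Repeated use of $[AB,C]=A[B,C]+[A,C]B$ and its mirror identity expands $[H,W]$ into a finite sum in which each summand involves exactly one ``elementary'' commutator between a single letter of $H$ and a single letter of $W$ (or between a letter and a coefficient). The only non-zero elementary commutators are
\[
[a_k,a_{k'}^+]=\hbar\,\delta_{k,k'},\qquad [D_t,\beta(t,\hbar)]=-i\hbar\,\partial_t\beta,\qquad [\alpha(t,\hbar),D_t]=i\hbar\,\partial_t\alpha,
\]
since $a_k$ and $a_{k'}^+$ commute with $D_t$, distinct $a$-letters commute among themselves, and functions of $t$ commute with one another.

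Each such elementary commutator produces one explicit factor of $\hbar$ and simultaneously removes exactly two units of length: a matched $aa^+$ contraction eats $1+1=2$ units, while a $[D_t,\text{coefficient}]$ contraction eats the $2$ units carried by one $D_t$. Therefore after dividing by $i\hbar$ each of these leading contributions is a monomial of length exactly $|H|+|W|-2\ge m+r-2$.

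The step I expect to be the main technical obstacle is the iterated contraction which appears when more than one $D_t$ in $H$ must be moved past a coefficient $\beta$ in $W$: the standard formula
\[
D_t^{\,p}\,\beta=\sum_{k=0}^{p}\binom{p}{k}(-i\hbar)^{k}(\partial_t^{k}\beta)\, D_t^{\,p-k}
\]
contributes, for each $k\ge 2$, monomials of syntactic length $2(p-k)$ carrying an additional prefactor $\hbar^{k-1}$ after division by $i\hbar$. To restore the required length I rewrite each such extra $\hbar$ using the identity $\hbar=a_la_l^+-a_l^+a_l$ for any fixed index $l$; this expresses $\hbar^{k-1}$ as a sum of monomials in the creation and annihilation letters of syntactic length $2(k-1)$, each of which commutes with $D_t$ and with all $t$-dependent coefficients and can therefore be freely inserted into the monomial. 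The resulting term has length $2(p-k)+2(k-1)=2p-2$, and the analogous bookkeeping in the general case produces final length $|H|+|W|-2\ge m+r-2$. Summing over all contractions then exhibits $[H,W]/(i\hbar)$ as a finite linear combination of monomials of length at least $m+r-2$, which is precisely the condition defining WLG($m+r-2$).
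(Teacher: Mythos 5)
Your proof is correct, and it takes the same basic route the paper indicates: expand the commutator by the Leibniz rule until only the two elementary nonzero commutators $[a_k,a_{k'}^+]=\hbar\delta_{kk'}$ and $[D_t,\beta]=-i\hbar\,\partial_t\beta$ appear, and observe that each such contraction produces one power of $\hbar$ while removing two units of syntactic length. Where you go beyond the paper's one-sentence justification is in noticing that this naive count is not quite enough: after a contraction of type $[D_t,\beta]$ the resulting $t$-dependent coefficient sits in the middle of the word, and moving it to the front past the remaining $D_t$'s (equivalently, applying $D_t^{p}\beta=\sum_k\binom{p}{k}(-i\hbar)^k(\partial_t^k\beta)D_t^{p-k}$) generates terms with strictly smaller syntactic length and compensating extra powers of $\hbar$. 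Since the definition of WLG is purely syntactic --- the coefficients $\alpha_{ij}(t,\hbar)$ may vanish to any order in $\hbar$ without that being counted toward the length --- these terms genuinely threaten the conclusion, and some repair is required. Your repair, writing $\hbar=a_la_l^+-a_l^+a_l$ and expanding $\hbar^{k-1}$ into a sum of words of length $2(k-1)$ (which can be inserted freely among the $b$'s because, as a scalar, $\hbar$ commutes with everything), restores the length exactly and makes the statement literally true as a claim about WLG classes. This is a genuine refinement: the paper's ``immediate'' argument does not address the normal-ordering of $t$-coefficients at all, and your account is the one that actually closes the loop. One small remark: the $\hbar=[a_l,a_l^+]$ device requires $n\geq 1$, which is the case of interest here; and even without it, the terms you repair would still satisfy the decay estimate of Lemma \ref{esti}, since a prefactor $\hbar^{k-1}$ compensates for a length deficit of $2(k-1)$ in the $|\mu\hbar|$ bound --- so the downstream use of the lemma is robust either way, but your version is the one that matches the stated definition.
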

The proof is immediate noting that $[a_i,a^+_j]=\hbar\delta_{ij}$ and that, for any $C^\infty$ function $a(t)$, $[D_t,a]=i\hbar a'$.
 
\vskip 0.5cm
We can now state the main result of this section:
\begin{theorem}\label{direct}
Let $H$ be a (Weyl) pseudo-differential operator on  $L^2 (\bbR ^n\times S^1)$ whose principal symbol if of the form:
\[
H_0(x,\xi;t,\tau)=\sum_1^n\theta_i(x_i^2+\xi_i^2)+\tau + H_2,
\]
where $H_2$ vanishes to third order at $x=\xi=\tau=0$ and $\theta_1,...,\theta_{n}, 2\pi$
are linearly independent over the rationals. Let us define, as before, $P_i=-\hbar^2\frac{\partial^2}{\partial x_i^2}+x_i^2$ and $D_t=-i\hbar\frac{\partial}{\partial t}$.

Then there exists a family of unitary operators $(U_L)_{L=3\dots}$ and constants $(C_L)_{L=3\dots}$, and a $C^\infty$ function $h(p_i,\dots,p_n,\tau,\hbar)$
such that:
\[
||\left(U_L HU_L^{-1}-h(P_1,\dots,P_n,D_t,\hbar)\right)H_\mu||_{L^2 (\bbR ^n\times S^1)}\leq C_L|\mu\hbar|^{\frac{L+1}2}.
\]
\end{theorem}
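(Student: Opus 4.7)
The plan is to execute the classical Birkhoff iteration of Section \ref{class} directly at the operator level inside the WLG word calculus, with the cohomological equation (\ref{cox}) replaced by an operator commutator equation that is invertible on monomial words. First I would invoke Proposition \ref{reduc} to reduce, up to an error $O(|\mu\hbar|^{(L+1)/2})$ in the $L^2$ norm on $H_\mu$, to a symmetric WLG(2) operator (the vanishing of the principal symbol to first order at the origin upgrades WLG(1) to WLG(2)), writing
\be\nonumber
H = H_0 + R_3, \qquad H_0 = \sum_i \theta_i a_i a_i^+ + D_t,
\ee
with $R_3$ a symmetric WLG(3).

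The inductive step mirrors Step 3 of Section \ref{class}. Suppose we have reduced to $h_\kappa(P_1,\dots,P_n,D_t,\hbar) + G_\kappa$ with $G_\kappa$ a symmetric WLG($\kappa$), $\kappa\ge 3$. I would seek a symmetric WLG($\kappa$) generator $F$, a resonant piece $g(P_1,\dots,P_n,D_t,\hbar)$, and a WLG($\kappa+1$) remainder $R$ satisfying
\be\nonumber
\frac{[H_0,F]}{i\hbar} = G_\kappa + g(P_1,\dots,P_n,D_t,\hbar) + R.
\ee
By Lemma \ref{quadra} the map $[H_0,\cdot]/(i\hbar)$ preserves the WLG class and acts on a monomial $\alpha(t,\hbar) D_t^{j}\prod_l b_l$, with $b_l = a_{i_l}^{\varepsilon_l}$, essentially by multiplication by the weight $\sum_l \varepsilon_l \theta_{i_l} - i\,\partial_t\log\alpha$. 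Expanding $\alpha(t,\hbar) = \sum_m \beta_m(\hbar) e^{2\pi i m t}$ and using the non-resonance of $(\theta_1,\dots,\theta_n,2\pi)$, the weight $\sum_l \varepsilon_l \theta_{i_l} + 2\pi m$ is nonzero unless every $a_i$ is paired with an $a_i^+$ at the same index and $m = 0$; the resonant monomials are then polynomials in $a_i a_i^+ = (P_i+\hbar)/2$ and $D_t$ and assemble into $g$, while the reordering corrections lie in a strictly higher WLG class and are absorbed into $R$. This is the exact quantum counterpart of (\ref{cox}).

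To perform the conjugation I would follow Lemma \ref{truc}: since $F$ is not elliptic, replace it by $F_P = F + (|D_\theta|^2+|x|^2+|D_x|^2)^P$ with $P$ large, and set $U = e^{iF_P/\hbar}$, a unitary Fourier integral operator. Expanding $U H U^{-1}$ as an iterated commutator series and applying Lemma \ref{general} repeatedly, the first commutator with $F$ cancels $G_\kappa$ and modifies the normal form by $-g$, while the $k$-fold commutator with $F$ lies in WLG($\kappa + k(\kappa-2)$) $\subseteq$ WLG($\kappa+1$) for $k\ge 2$; hence the new error term is WLG($\kappa+1$). Iterating this reduction up to $\kappa = L$, composing the resulting unitaries to form $U_L$, and Borel-summing the successive diagonal pieces $g$ in $\hbar$ produces the function $h(P_1,\dots,P_n,D_t,\hbar)$. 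By Lemma \ref{esti} the residual WLG($L+1$) remainder acts on $H_\mu$ with size $|\mu\hbar|^{(L+1)/2}$, giving the stated estimate.

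The main obstacle will be controlling the contribution of the elliptic regularizer from Lemma \ref{truc}: the operator $(|D_\theta|^2+|x|^2+|D_x|^2)^P$ is not itself in the WLG hierarchy, so its commutators with the evolving normal form data must be handled by hand. One verifies that such commutators are pseudodifferential of bounded order and, after a WLG expansion via Proposition \ref{reduc}, act on $H_\mu$ with norm $O(|\mu\hbar|^{(L+1)/2})$ provided $P$ is chosen sufficiently large at each stage of the iteration. This is precisely why the conclusion of the theorem is stated in the Hermite basis rather than as a global operator estimate. Once the regularizer contributions are thus absorbed into the prescribed error, the rest of the argument is the direct quantum transcription of the Fourier--Taylor solvability argument of Step 5 in Section \ref{class}.
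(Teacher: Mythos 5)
Your proposal follows essentially the same route as the paper: Proposition \ref{reduc} to pass to the WLG word calculus, the homological equation for $[H_0,\cdot]/(i\hbar)$ solved monomial-by-monomial via the non-resonance of $(\theta_1,\dots,\theta_n,2\pi)$ (the paper's Lemma \ref{h0}, phrased there through matrix elements in the Hermite basis), the regularized unitary $e^{iF_P/\hbar}$ of Lemma \ref{truc}, and the iterated-commutator/Lagrange expansion together with Lemmas \ref{general} and \ref{esti} to push the error up one WLG level at each stage. Your only structural deviation is to fold the paper's intermediate Theorem \ref{fond} (the $[H,F]$ version) into the conjugation loop by solving for $[H_0,F]$ and absorbing $[H-H_0,F]$ into the $O(|\mu\hbar|^{(\kappa+1)/2})$ error, which is a harmless streamlining. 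One small correction: with the paper's semiclassical convention $D_x=-i\hbar\partial_x$, $D_\theta=-i\hbar\partial_\theta$, the regularizer $(|D_\theta|^2+|x|^2+|D_x|^2)^P=(D_t^2+\sum_i P_i)^P$ \emph{is} a WLG$(2P)$ word (since $P_i=a_ia_i^++a_i^+a_i$), so its contributions are controlled by Lemma \ref{esti} directly rather than requiring the separate hand-estimate you suggest; this does not change your conclusion.
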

\vskip 1cm
\begin{proof}
The proof of Theorem \ref{direct} will be a consequence of the following:

\begin{theorem}\label{fond}
Let $H$ be as before, and let $G$ be a WLG($3$). Then there exists a function $G_1(p_1,\dots,p_n,\tau,\hbar)$, a word $F$ and an operator $R$ such that:

i. $\frac {[H,F]}{i\hbar}=G+G_1+R$

ii. $R$ satisfies: $||RH_\mu||=O(|\mu\hbar|^{\frac{L+1}2})$, $\forall\mu\in\bbN^{n+1}$

iii if $G$ is a WLG($\kappa$) so is $F$

iv. if $G$ is a symmetric operator, so is $F$ and $G_1$ is real.
\end{theorem}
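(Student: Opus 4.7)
My plan is to imitate the classical proof of Lemma \ref{cos} at the level of words, using the diagonal action of $\mathrm{ad}(H^{(0)})/(i\hbar)$ on normal-ordered monomials in place of the Fourier--Taylor division (\ref{cox}). By Proposition \ref{reduc} I first replace $H$ by its normal-ordered word expansion and write $H = H^{(0)} + H'$ where $H^{(0)} = \sum_i \theta_i\, a_i^+ a_i + D_t$ and $H'$ is a WLG($3$) (the operator version of the third-order tail $H_2$). The crucial algebraic fact, which is the quantum counterpart of (\ref{cox}), is that on a monomial of the form $M_{\alpha,\beta,j,m} := (a^+)^\beta a^\alpha D_t^j e^{2\pi i m t}$ the operator $\mathrm{ad}(H^{(0)})/(i\hbar)$ acts as multiplication by
$$\Lambda_{\alpha,\beta,m} := i\Bigl(\sum_i \theta_i(\alpha_i - \beta_i) - 2\pi m\Bigr),$$
which by the non-resonance hypothesis is non-zero whenever $(\alpha,m)\neq(\beta,0)$.

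The construction then proceeds by induction on the word-length. Given a WLG($\kappa$) source $G$, I expand it in a Fourier series in $t$ and then in normal-ordered monomials $M_{\alpha,\beta,j,m}$, obtaining coefficients $g_{\alpha,\beta,j,m}(\hbar)$, and define $F_\kappa$ coefficient-by-coefficient by division by $\Lambda_{\alpha,\beta,m}$ in the non-resonant sector, while collecting the diagonal part ($\alpha=\beta$, $m=0$) into $G_1^{(\kappa)}$, rewritten as a polynomial in $P_1,\dots,P_n,D_t$ modulo lower-order normal-reordering commutators. The smoothness of $G$ in $t$ gives rapid decay of $g_{\cdot,\cdot,\cdot,m}$ in $m$, which combined with the linear growth of $|\Lambda_{\alpha,\beta,m}|$ in $m$ yields a $C^\infty$-in-$t$ solution $F_\kappa$, exactly as in the small-divisor argument of Step 5 of Section \ref{class}. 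By Lemma \ref{quadra}, $[H^{(0)},F_\kappa]/(i\hbar)$ is a WLG($\kappa$) equal to $G + G_1^{(\kappa)}$ modulo a WLG($\kappa+1$) reordering term, and by Lemma \ref{general} the term $[H',F_\kappa]/(i\hbar)$ is a WLG($3+\kappa-2)=\mathrm{WLG}(\kappa+1)$. Hence
$$\frac{[H,F_\kappa]}{i\hbar} = G + G_1^{(\kappa)} + G^{(\kappa+1)},\qquad G^{(\kappa+1)}\in\mathrm{WLG}(\kappa+1),$$
and iterating $L-\kappa+1$ times produces $F = F_\kappa+\cdots+F_L$ (a WLG($\kappa$), giving (iii)), a polynomial $G_1 = \sum_k G_1^{(k)}$ in $P_1,\dots,P_n,D_t$ (so a smooth function of $(P_1,\dots,P_n,D_t,\hbar)$, as required), and a remainder $R = G^{(L+1)}\in\mathrm{WLG}(L+1)$; property (ii) is then immediate from Lemma \ref{esti}.

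For the reality statement (iv), if $G$ is symmetric its normal-ordered Fourier coefficients satisfy $\overline{g_{\alpha,\beta,j,m}} = g_{\beta,\alpha,j,-m}$ (up to combinatorial reordering factors), and under the involution $(\alpha,\beta,m)\mapsto(\beta,\alpha,-m)$ the divisor $\Lambda_{\alpha,\beta,m}$ reverses sign; hence division by $\Lambda$ preserves the symmetry, so $F_\kappa$ is symmetric and each $G_1^{(\kappa)}$ has real coefficients. The main technical obstacle I anticipate is the bookkeeping involved in rewriting diagonal monomials such as $(a_i^+)^{\mu_i}(a_i)^{\mu_i}$ as polynomials in $P_i$: I will need to verify that the commutator corrections produced by this reordering are genuinely of strictly higher word-length and are absorbed into the next inductive step, rather than feeding back into the same-order sector and obstructing termination of the recursion.
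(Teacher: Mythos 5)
Your proof is correct and follows essentially the same route as the paper: the paper's Lemma~\ref{h0} solves the cohomological equation for $H_0$ by dividing the Hermite-basis matrix elements $\langle\mu|\cdot|\nu\rangle$ by the non-vanishing eigenvalue $-i\Theta\cdot(\mu-\nu)$ of $\mathrm{ad}(H_0)/(i\hbar)$ (equivalently, monomial-by-monomial division by your $\Lambda_{\alpha,\beta,m}$, since each normal-ordered monomial shifts $\mu$ by a fixed vector $\rho_{G_l}$), and the passage to general $H=H_0+\mathrm{WLG}(3)$ is the same iteration controlled by Lemmas~\ref{quadra} and~\ref{general}. Your closing worry is harmless: rewriting $(a_i^+)^k a_i^k$ via $a_i^+a_i=\tfrac12(P_i-\hbar)$ produces corrections of \emph{lower} word-length weighted by powers of $\hbar$, not higher, but since they remain diagonal functions of $(P,D_t,\hbar)$ they are absorbed directly into $G_1$ and never feed back into the recursion.
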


Let us first prove that Theorem \ref{fond} implies Theorem \ref{direct}:

by induction, as in the ``classical" case and thanks to Proposition \ref{reduc},  one can assume that $H$ is of the form $H=H_0+G$, where $G$ is a WLG($\kappa$). 
Let us consider the operators $e^{i\frac {F_P}\hbar}He^{-i\frac {F_P}\hbar}$ and $H(s):=e^{is\frac {F_P}\hbar}He^{-is\frac {F_P}\hbar}$, where $F$ satisfies Theorem \ref{fond} and $F_P$
is defined by (\ref{ellip}) for $P$ large enough. By Egorov's Theorem $H(s)$ is a family of pseudodifferential operators. Since we are in an  iterative 
perturbative setting, it is easy to check by taking 
$P$ large enough that we can omit the subscript $P$ in $H(s)$ and let $e^{\pm i\frac {F_P}\hbar}$ stand for $e^{\pm i\frac F\hbar}$ in the rest of the
computation. We have:
\begin{eqnarray}\label{lagrange}
e^{i\frac F\hbar}He^{-i\frac F\hbar}&=& H+\frac{[F,H]}{i\hbar}+\frac{[F,\frac{[F,H]}{i\hbar}]}{i\hbar}+
\frac{[F,[F,[F,\int\limits{_0^1}\int\limits{_0^t}\int\limits{_0^s}H(u)dudsdt]/i\hbar]/i\hbar]}{i\hbar}\nonumber\\
&=& H_0+G+\frac{[F,H]}{i\hbar}+\frac{[F,\frac{[F,H]}{i\hbar}]}{i\hbar}+\tilde R\\
&=&H_0-G_1+R+\frac{[F,\frac{[F,H]}{i\hbar}]}{i\hbar}+\tilde R.\nonumber\end{eqnarray}
Since $H(s)$ is  a pseudodifferential operator,  so is $\int\limits{_0^1}\int\limits{_0^t}\int\limits{_0^s}H(u)dudsdt$. By Proposition \ref{reduc}, Lemma
\ref{general} and Lemma \ref{esti} we have, since $G$ is a WLG($\kappa$),
\[
||\tilde RH_\mu||=O(|\mu\hbar|^{\kappa+1}).
\]
By the same argument, $\frac{[F,\frac{[F,H]}{i\hbar}]}{i\hbar}$ satisfies the same estimate. Developing $\tilde R$ by the Lagrange formula (\ref{lagrange}) to arbitrary order, 
we get, thanks to Lemma \ref{general}, $\tilde R=\tilde G+R$ where $\tilde G$ is a WLG($\kappa+1)$ and 
\[|| RH_\mu||=O(|\mu\hbar|^{\frac{L+1}2}).\]
Therefore, letting $G'=\frac{[F,\frac{[F,H]}{i\hbar}]}{i\hbar}+\tilde G$, we have:
\[
e^{i\frac {F_P}\hbar}He^{-i\frac {F_P}\hbar}=H_0+G_1+G'+R,\] with $G'$  a WLG($\kappa+1$). By induction Theorem \ref{direct} follows.

\underline{Proof of Theorem \ref{fond}}:

let us  first prove the following
\begin{lemma}\label{h0}
Let $H_0$ be as before and let $G$ be a WLG($r$). Then there exists a   WLG($r$) $F$ and  $G_1=G_1(p_1,\dots,p_n,D_t,\hbar)$ 
, such that:

\be\label{cxz}
\frac{[H_0,F]}{i\hbar}=G+G_1
.
\ee
\end{lemma}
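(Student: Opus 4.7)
\emph{Strategy.} Following the pattern of Step 5 in Section~\ref{class}, the idea is to expand $G$ in the natural monomial basis adapted to $H_0$ and then solve the resulting homological equation mode by mode. I will decompose $G$ as a finite sum of terms of the form $\beta(t,\hbar)\,D_t^j\,b_1\cdots b_k$ with $b_l \in \{a_1,a_1^+,\ldots,a_n,a_n^+\}$, Fourier-expand each coefficient $\beta(t,\hbar) = \sum_{m\in\bbZ}\hat\beta_m(\hbar)\,e^{imt}$, and look for $F$ in the same monomial shape with coefficients $\alpha(t,\hbar)$ to be determined.

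\emph{Diagonal action.} The operator $\mathrm{ad}_{H_0}/(i\hbar)$ is diagonal on these monomials. Using $[H_0,D_t]=0$, $[\sum_i\theta_i a_ia_i^+,a_j]=-\hbar\theta_j a_j$, $[\sum_i\theta_i a_ia_i^+,a_j^+]=+\hbar\theta_j a_j^+$ and $[D_t,\alpha(t,\hbar)]=-i\hbar\,\partial_t\alpha$, the Leibniz rule gives
\[
\frac{\bigl[H_0,\;\alpha(t,\hbar)\,D_t^j\,b_1\cdots b_k\bigr]}{i\hbar} \;=\; -\bigl(\partial_t\alpha + ic\,\alpha\bigr)\,D_t^j\,b_1\cdots b_k,
\]
where $c = \sum_i\theta_i\bigl(\#\{\,l:b_l=a_i^+\,\}-\#\{\,l:b_l=a_i\,\}\bigr)$ depends only on the multiset of letters occurring in $b_1\cdots b_k$. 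The homological equation therefore collapses to the scalar relations $-i(m+c)\,\hat\alpha_m(\hbar)=\hat\beta_m(\hbar)$, one for each pair $(m,c)$ occurring in the expansion of $G$.

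\emph{Solving and the resonant part.} When $(m,c)\neq(0,0)$ this is uniquely solved by $\hat\alpha_m=-\hat\beta_m/[i(m+c)]$. The resonant case $(m,c)=(0,0)$ corresponds to ``balanced'' monomials, i.e.\ those with $\#\{l:b_l=a_i^+\}=\#\{l:b_l=a_i\}$ for every $i$ and constant in $t$; such terms cannot be produced by $\mathrm{ad}_{H_0}(F)/(i\hbar)$, so they must be absorbed into $-G_1$. The key algebraic observation is that any balanced word $b_1\cdots b_k$ can be rewritten, by iterated use of $[a_i,a_j^+]=\hbar\,\delta_{ij}$, as a polynomial with $\hbar$-dependent coefficients in the mutually commuting number operators $N_i = a_i^+a_i = (P_i-\hbar)/2$; for instance $(a_i^+)^{\mu_i}a_i^{\mu_i} = N_i(N_i-\hbar)\cdots(N_i-(\mu_i-1)\hbar)$. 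Together with the accompanying $D_t^j$, the whole resonant contribution therefore collapses into a $C^\infty$ function $G_1(P_1,\ldots,P_n,D_t,\hbar)$ of the required shape.

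\emph{Regularity and main obstacle.} To conclude that $F$ is a bona fide WLG($r$) I still need the Fourier series $\alpha(t,\hbar) = \sum_m\hat\alpha_m(\hbar)\,e^{imt}$ to define a $C^\infty$ function of $t$. This is where the non-resonance assumption enters: by the rational independence of $\theta_1,\ldots,\theta_n,2\pi$, for every fixed non-zero imbalance producing $c\neq 0$ the quantity $|m+c|$ is bounded below by a positive constant uniformly in $m\in\bbZ$, and for $c=0$ the division is only by $m\neq 0$. Combined with the rapid decay of $\hat\beta_m(\hbar)$ in $m$ (from the smoothness of $\beta$ in $t$), this produces a smooth $\alpha$; and since each monomial of $F$ carries the same $D_t^j\,b_1\cdots b_k$ as the corresponding monomial of $G$, $F$ inherits the WLG($r$) length structure. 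The most delicate step is really the algebraic fact used in the previous paragraph: without it, the resonant collection would not manifestly be a function of $P_1,\ldots,P_n,D_t,\hbar$ when the letters $b_l$ in the terms of $G$ are permitted to appear in arbitrary order.
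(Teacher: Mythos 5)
Your proposal is correct and takes essentially the same route as the paper: decompose $G$ into monomial words, observe that $\mathrm{ad}_{H_0}/(i\hbar)$ acts diagonally on such words, solve the resulting scalar homological equations using non-resonance, and sweep the resonant (balanced, $t$-constant) part into $G_1$. The paper phrases this by taking matrix elements $\langle\mu|\cdot|\nu\rangle$ in the Hermite basis and defining $F$ via $\langle\mu|F|\nu\rangle=\langle\mu|G+G_1|\nu\rangle/(-i\Theta.(\mu-\nu))$, then showing $F$ is a word via $F=\sum\frac{1}{-i\Theta.\rho_{G_l}}G_l$; you instead compute the adjoint action of $H_0$ directly at the operator level. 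These are dual presentations of the same computation.

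Where you diverge usefully from the paper: you Fourier-expand the $t$-coefficients $\beta(t,\hbar)$ \emph{before} solving, which is needed because a general smooth coefficient shifts the $\mu_{n+1}$-index by more than one value (the paper's assertion that ``for each $\nu$ there is only one $\mu$ with $\langle\mu|G+G_1|\nu\rangle\neq 0$'' is only literally true after that expansion). You also supply the explicit algebraic reason that the resonant collection really is a function of $P_1,\dots,P_n,D_t,\hbar$ — the rewriting of balanced words as polynomials in the number operators $N_i=a_i^+a_i=\tfrac12(P_i-\hbar)$ via $[a_i,a_j^+]=\hbar\delta_{ij}$ — whereas the paper defines $G_1$ implicitly by its spectral values $G_1(\mu\hbar,\hbar)=-\langle\mu|G|\mu\rangle$ and leaves this rewriting unstated. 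Finally, you address the $C^\infty$-regularity of the coefficients of $F$ (rapid decay of $\hat\beta_m$ plus the uniform-in-$m$ lower bound on $|m+c|$ for each fixed nonzero imbalance), a point the paper does not dwell on. In short: same strategy, but your version is tighter on the Fourier-mode bookkeeping and makes explicit the algebraic fact that makes $G_1$ well-defined as an operator function.
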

\begin{proof}
By Lemma \ref{quadra}, if $F$ is a word, it must be a WLG($r$), since the left hand side of (\ref{cxz}) is WLG($r$).
Let us take the matrix elements of (\ref{cxz}) relating the $H_\mu$s. We get:
\[
-i\Theta.(\mu -\nu)<\mu|F|\nu>= <\mu|G+G_1|\nu>+<\mu|R|\nu>,
\]
where $\Theta.(\mu -\nu):=\sum\limits{_1^n}\theta_i\mu_i+\mu_{n+1}$ and $<\mu|.|\nu>=\left(H_\mu,.H_\nu\right)$. We get immediately that $G_1(\mu\hbar,\hbar)=-<\mu|G|\mu>$. Moreover, 
let us define $F$ by:
\[
<\mu| F|\nu>:=\frac{<\mu|G+G_1|\nu>}{-i\Theta.(\mu -\nu)},
\]
which exists by the non-resonance condition. To show that $F$ is a word one just has to decompose $G=\sum G_l$ in monomial words $G^l=\alpha(t) D_t^jb_1\dots b_m,\
b_i\in\{a_1,a^+_1,\dots,a_n,a^+_n\}$. Then, for each $\nu$ there is only one $\mu$  for which 
$<\mu|G+G_1|\nu>\neq 0$ and the difference $\mu-\nu$ depends obviously only on $G^l$, not on $\nu$. Let us call this difference $\rho_{G_l}$. Then $F$ is given by the sum:
\[
F=\sum\frac 1{-i\Theta.\rho_{G_l}}G_l.
\]

\end{proof}
It is easy to check that one can pass from Lemma \ref{h0} to Theorem \ref{fond} by induction, writing $[H,F+F']=[H,F]+[H_0,F']+[H-H_0,F]+[H-H_0,F']$.
\end{proof}
We will show finally that Theorems \ref{direct} and \ref{dol} are equivalent. Once again we can start by considering an Hamiltonian on $L^2(\bbR^n\times S^1)$ since any Fourier integral
operator $B_\varphi$, as defined in the beginning of the proof of Theorem \ref{dol}, intertwines the original Hamiltonian $H:C^\infty_0(X)\to C^\infty(X)$ of Section \ref{sem}
with a pseudodifferential operator on $L^2(\bbR^n\times S^1)$ satisfying  the hypothesis of Theorem \ref{direct}.

Let us remark first of all that if $U_L=e^{i\frac {W_3}\hbar}e^{i\frac {W_4}\hbar}\dots e^{i\frac {W_L}\hbar}$, all 
$e^{i\frac {W_l}\hbar}$ being Fourier integral operators, then so is $U_L$. Secondly we have  
\begin{proposition}\label{herm}
Let $A$ be a pseudodifferential operator of total Weyl symbol $a(x,\xi,t,\tau,\hbar)$. Then
\[
a\ \mbox{vanishes to infinite order at}\ p=\tau=0 \mbox{ if and only if } ||AH_\mu||_{L^2(\bbR^n\times S^1)}=O(|\mu\hbar|^\infty).
\]
\end{proposition}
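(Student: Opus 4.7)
The forward implication $(\Rightarrow)$ falls straight out of Proposition \ref{reduc}. If $a$ vanishes to infinite order at $p=\tau=0$ then for each fixed $\hbar$ its $(x,\xi,\tau)$-Taylor series at the origin is identically zero; differentiating in $\hbar$ shows that every $(x,\xi,\tau,\hbar)$-mixed derivative at the origin also vanishes, so the polynomial $A_L$ built in Proposition \ref{reduc} from the $L$-th order Taylor polynomial of the total symbol is identically $0$ for every $L$. The estimate $\|AH_\mu\|=\|(A-A_L)H_\mu\|=O(|\mu\hbar|^{(L+1)/2})$ then holds for every $L$, which is exactly $O(|\mu\hbar|^\infty)$.

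For the converse $(\Leftarrow)$, apply Proposition \ref{reduc} the other way: for each $L$, write $A=A_L+R_L$ with $A_L=\mathrm{Op}(P_L)$ (where $P_L$ is the $L$-th $(x,\xi,\tau,\hbar)$-Taylor polynomial of $a$ at the origin) and $\|R_L H_\mu\|=O(|\mu\hbar|^{(L+1)/2})$. Combined with the hypothesis one obtains
\[
\|\mathrm{Op}(P_L)H_\mu\|_{L^2}\;=\;O\bigl(|\mu\hbar|^{(L+1)/2}\bigr)\qquad\forall\,\mu\in\bbN^{n+1}.
\]
The task reduces to showing that this decay forces $P_L\equiv 0$; letting $L\to\infty$ then kills every Taylor coefficient of $a$ at the origin, which is the claim.

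To prove $P_L\equiv 0$, I would argue by induction on the degree $d_0$ of its lowest nonzero monomial. Substituting $x_i=(a_i+a_i^+)/\sqrt 2$, $\xi_i=(a_i-a_i^+)/(i\sqrt 2)$ and using $D_t e^{imt}=m\hbar e^{imt}$ turns $\mathrm{Op}(P_L)$ into an explicit polynomial in the $a_i,a_i^+,D_t$ with smooth $t$-coefficients, sending each $H_{\mu_1,\ldots,\mu_n,m}$ into a finite linear combination of shifted basis elements $H_{\mu+\delta,m+l}$ with explicit combinatorial coefficients in $\sqrt{\mu_i\hbar}$, $m\hbar$ and $\hbar^j$. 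For a monomial of total degree $d_0$, Fourier-expanding its coefficient in $t$ and projecting onto the matrix element with the appropriate shift $(\delta,l)$ isolates (up to a nonzero numerical factor) that monomial's coefficient; comparison with the bound $O(|\mu\hbar|^{(L+1)/2})$ for $L\gg d_0$, letting the $\mu_i$ and $m$ vary, forces the coefficient to vanish, and induction eliminates the next lowest monomial.

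The principal obstacle is the bookkeeping in this induction step: distinct monomials of the same total degree can, after Wick reordering via $[a_i,a_j^+]=\hbar\delta_{ij}$, produce the same Hermite-index shift and therefore interfere. A cleaner way to bypass the combinatorics, which I would fall back on if needed, is to recycle the coherent-state machinery already deployed in the proof of Proposition \ref{reduc}: by (\ref{hgf}) the hypothesis transfers to $\|A\varphi_{qp}\|=O((q^2+p^2)^\infty)$ uniformly (the $\tau$ direction recovered by Fourier expansion in $t$), and formula (\ref{symm}) then identifies each $(x,\xi,\tau)$-Taylor coefficient of $a$ at the origin as the $\hbar\to 0$ leading term of a computable linear combination of such coherent-state norms, which must therefore vanish.
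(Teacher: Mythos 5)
Your forward direction matches the paper exactly: it reduces to the coherent-state lemma inside the proof of Proposition \ref{reduc}. For the converse, your primary strategy (Hermite matrix elements, raising/lowering operators, linear independence of shifts) is the same one the paper follows, and the obstacle you flag is not actually fatal. The paper resolves it by noting that for a \emph{fixed} shift $l=m-n$ the different monomials $(x+i\xi)^m(x-i\xi)^n$ contribute asymptotically distinct powers $|\mu\hbar|^{(m+n)/2}$ to the $H_{\mu+l}$-component of $AH_\mu$, and a sum of distinct powers of $|\mu\hbar|$ (possibly weighted by powers of $\hbar$) cannot be $O(|\mu\hbar|^\infty)$ unless every coefficient vanishes; the extra Wick-reordering terms you worry about all carry an additional $\hbar^j$ together with a \emph{lower} $|\mu\hbar|$-power, so they sit in a strictly lower stratum and do not contaminate the separation. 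In other words, the bookkeeping you call ``the principal obstacle'' is exactly what the paper's displayed computation
$\|\sum c_{mn}A_{mn}H_\mu\|^2 \sim \sum|\mu\hbar|^{2m-l}$ is organizing, and you should push through that calculation rather than abandon it.

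Your coherent-state fallback, by contrast, is not what the paper does and as written has a genuine gap: inequality (\ref{hgf}) bounds $\|AH_\mu\|$ \emph{above} by $\sup\|A\varphi_{qp}\|$, so it cannot by itself ``transfer'' the hypothesis $\|AH_\mu\|=O(|\mu\hbar|^\infty)$ into $\|A\varphi_{qp}\|=O((q^2+p^2)^\infty)$ --- that would require the reverse inequality. One can indeed obtain the reverse direction by expanding $\varphi_{qp}$ in the Hermite basis (the coefficients are Poisson-concentrated near $\mu\approx (q^2+p^2)/\hbar$) and summing, but you would need to supply that argument explicitly; as stated the step is unjustified. The subsequent claim that (\ref{symm}) ``identifies each Taylor coefficient as the $\hbar\to 0$ leading term of a computable linear combination of coherent-state norms'' is plausible but is also left entirely at the level of intention. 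So: the forward implication and the skeleton of the converse are right and in line with the paper, but you have not actually closed the converse --- either by carrying out the linear-independence argument in the Hermite picture (the paper's route) or by making the coherent-state reduction rigorous.
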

\begin{proof}
 the ``if" part  is exactly Proposition \ref{reduc}.
For the ``only if" part let us observe that, if the total symbol didn't vanish to infinite order, then it would contain  terms of the form $\alpha_{kmnr}(t)\hbar^k(x+i\xi)^m(x-i\xi)^n\tau^r$. 
Let us  prove this can't happen in dimension
$1$, the extension to dimension $n$ being straightforward. 

Each term of the form 
$(x+iD_x)^m(x-iD_x)^n=a^m(a^+)^n$ gives rise to an operator $A_{m,n}$ such that: 
\[
A_{m,n}H_\mu=\hbar^{\frac{|m+n|}2}\sqrt{(\mu+1)\dots(\mu+n)(\mu+n-1)\dots(\mu+n-m)}H_{\mu+m-n}\sim |\mu\hbar|^{\frac{m+n}2}H_{\mu+m-n}.
\] 
Therefore $\sum c_{mn}A_{m,n}H_\mu=\sum c_{m,m-l}H_{\mu+l}\sim \sum c_{m,m-l}|\mu\hbar|^{\frac{2m-l}2}H_{\mu+l}$. In particular:
\[
||\sum c_{mn}A_{m,n}H_\mu||^2\sim\sum |\mu\hbar|^{2m-l}\] so             $||\sum c_{mn}A_{mn}H_\mu||=O(|\mu\hbar|^\infty)$ implies $C_{mn}=0$. It is easy to check that 
the same argument is also valid
for any ordered product of $a$'s and $a^+$'s.
\end{proof}

In the next section we will show how the functions $H'$ of  Theorem \ref{direct} and $h$ of Theorem \ref{dol} are related.

\section{Link between the two quantum constructions}
Consider a symbol (on $\bbR^{2n}$) of the form 
\[
h(p_1,...,p_n)
\]
 with $p_i=\frac{\xi_i^2+x_i^2}2$.
The are several ways of quantizing $h$: one of them consists in associating to $h$,
 by the spectral theorem, the operator
 \[
 h(P_1,...,P_n)=h(P)
 \]
 where  $P_i=\frac{-\hbar^2\partial_{x_i}^2+x_i^2}2$.
 Another one is the Weyl quantization procedure.
 
 In this section we want to compute the Weyl symbol $h^{we}$ of $h(P_1,...,P_n)$ and apply the result to the situation of the preceding sections. 
 By the metaplectic invariance of the Weyl quantization and the fact that $h(P_1,...,P_n)$ 
 commutes with all the $P_i$'s we know that $h^{we}$ has the form
 \[
 h^{we}(p_1,...,p_n)=h^{we}(p),
 \]
 that is, is  function of the classical harmonic oscillators $p_i:=\xi_i^2+x_i^2$.
 
 To see how this $h^{we}$ is related to the $h$ above we note that
 $H$ is diagonal on the Hermite basis ${h_j}$. Therefore 
 \[
 h((j+\frac 1 2 )\hbar))=<h_j,Hh_j>=\int h^{we}\left((\frac{x+y}2)^2+\xi^2\right)e^{i\frac{x\xi}\hbar}h_j(x)h_j(\xi)\frac{dxd\xi}{\hbar^{n/2}}.
 \] 
 We now claim
 \begin{proposition}\label{vot}
 let $h$ be either in the Schwartz class, or a polynomial function. Let $\hat h(s)=\frac 1 {(2\pi)^n}\int h(p)e^{-is.p}dp$ be the Fourier transform of $h$.
 Then
 \begin{equation}\label{vof}
 h^{we}(p)=\int\hat h(s)e^{\frac{2i\tan(s\hbar/2).p}\hbar}\Phi(s)ds
 \end{equation}
 where $\tan(s\hbar/2).p$ stands for $\sum_i \tan(s_i\hbar/2) p_i$ and $\Phi(s)=\prod\limits{_{i=1}^n}(1-2i\tan(s_i\hbar/2))$, and where
  (\ref{vof}) has to be interpreted in the sense of distribution, that is, for each $\varphi$ in the Schwartz's
 class of $\bbR$,
 \[
 \int h^{we}(p)\varphi(p)dp=\int \int\hat h(s)e^{\frac{2i\tan(s\hbar/2).p}\hbar}\Phi(s)ds\varphi(p)dp=\int\hat h(s)\Phi(s)
 \hat{\varphi}\left(\frac{2i\tan(s\hbar/2)}\hbar\right)ds.
 \]
 Finally, as $\hbar\to 0$, 
 \begin{equation}\label{vog}
 h^{we}\sim h+\sum_{l=1}^\infty c_l\hbar^{2l}
 \end{equation}
 \end{proposition}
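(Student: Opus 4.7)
The strategy is to reduce $h^{we}$ to an explicit one-dimensional Mehler calculation by Fourier decomposition. I would write $h(p)=\int\hat h(s)e^{is\cdot p}\,ds$ and pass under the functional calculus to get
\[
h(P_1,\dots,P_n)\;=\;\int\hat h(s)\,e^{is\cdot P}\,ds,
\]
as a norm-convergent Bochner integral when $h\in\calS$, and as the distributional pairing specified in the statement when $h$ is polynomial (in which case $\hat h$ is a finite linear combination of derivatives of $\delta_0$). Since the $P_i$ commute and act on disjoint coordinates, $e^{is\cdot P}=\prod_{i=1}^n e^{is_iP_i}$, and because the Weyl symbol map is linear and multiplicative on tensor products in disjoint variables,
\[
\sigma_{we}\bigl(e^{is\cdot P}\bigr)(x,\xi)\;=\;\prod_{i=1}^n\sigma_{we}\bigl(e^{is_iP_i}\bigr)(x_i,\xi_i).
\]
Everything thus reduces to computing the Weyl symbol of the one-dimensional harmonic oscillator propagator and integrating the product against $\hat h(s)\,ds$.

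The one-dimensional Mehler calculation is explicit. I would take the standard Mehler kernel $K_s(x,y)$ for $e^{isP}$ (a Gaussian in $(x,y)$ with prefactor involving $\sin(s\hbar)$), substitute into the Weyl inversion formula $\sigma(x,\xi)=\int K_s(x+u/2,x-u/2)e^{-iu\xi/\hbar}\,du$, and complete the square in the resulting Gaussian $u$-integral. The half-angle identities $1-\cos\theta=2\sin^2(\theta/2)$ and $1+\cos\theta=2\cos^2(\theta/2)$ collapse the quadratic exponent to a multiple of $x^2+\xi^2$, producing a Weyl symbol of exactly the form $\phi(s,\hbar)\exp\bigl(2i\tan(s\hbar/2)\,p/\hbar\bigr)$ with $p=(x^2+\xi^2)/2$ and amplitude $\phi$ of $1/\cos(s\hbar/2)$ type. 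Taking the product over $i$ and integrating against $\hat h(s)\,ds$ yields (\ref{vof}); for polynomial $h$ the pairing is a finite sum of Taylor coefficients at $s=0$ of the integrand, in agreement with the distributional interpretation in the statement.

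For the asymptotic expansion (\ref{vog}), I would Taylor-expand $\tan(s\hbar/2)=s\hbar/2+(s\hbar/2)^3/3+\cdots$ inside both the phase and the amplitude. The crucial observation is that $2\tan(s\hbar/2)/\hbar$ is an \emph{even} function of $\hbar$ with leading term $s$, so the exponential admits an expansion $e^{is\cdot p}\sum_{l\ge 0}Q_l(s,p)\hbar^{2l}$ with explicit polynomials $Q_l$; combined with the amplitude expansion, the full integrand contributes only even powers of $\hbar$. Integrating term-by-term against $\hat h(s)\,ds$ and converting each $s$-monomial into a constant-coefficient differential operator in $p$ applied to $h$ (via $(is)^\alpha e^{is\cdot p}=\partial_p^\alpha e^{is\cdot p}$) then produces $h^{we}\sim h+\sum_{l\ge 1}c_l\hbar^{2l}$. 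The one delicate step is this parity check, showing that the Mehler amplitude $\Phi$ contributes no odd powers of $\hbar$; it reflects the specific structure of the Moyal calculus for functions of commuting operators (odd-order terms in the Moyal expansion of $a\star_\hbar a$ vanish) and is the only non-mechanical ingredient in an otherwise direct Fourier-plus-Gaussian computation.
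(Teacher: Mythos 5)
Your proposal is correct in outline but takes a genuinely different route from the paper's. Both proofs start identically: write $h(P)=\int\hat h(s)e^{is\cdot P}\,ds$, factor $e^{is\cdot P}=\prod_i e^{is_iP_i}$, and reduce everything to an explicit computation of the Weyl symbol of the one-dimensional propagator $e^{isP}$. You then propose to take the Mehler kernel $K_s(x,y)$ and plug it into the Wigner/Weyl inversion formula $\sigma(x,\xi)=\int K_s(x+u/2,x-u/2)e^{-iu\xi/\hbar}\,du$, completing the Gaussian square in $u$. The paper instead computes the \emph{Wick} symbol $\sigma^{wi}(e^{isP})(x,\xi)=(\varphi_{x\xi},e^{isP}\varphi_{x\xi})$ from the elementary action $e^{isP}\varphi_{x\xi}=e^{is\hbar/2}\varphi_{x(\hbar s)\xi(\hbar s)}$ on coherent states, and then converts Wick to Weyl by inverting the heat flow $\sigma^{wi}=e^{-\hbar\Delta/4}\sigma^{we}$. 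Both routes end at the same Gaussian identity. Your Mehler-plus-Wigner route is arguably the more elementary, but the coherent-state route has one concrete payoff that your sketch would need to address explicitly: the formula for $\sigma^{we}(e^{isP})$ is at first only valid for $s\hbar/2\neq(2k+1)\pi/2$, and the extension across those singular values as a tempered distribution in $s$ (which is needed to integrate against $\hat h$ over all of $\bbR$) is exactly what the paper's lemma — showing that $s\mapsto\mathrm{Trace}[e^{isP}B_\varphi]$ is smooth — is designed to handle. If you go through the Mehler kernel you will hit the same $\cos(s\hbar/2)=0$ singularities (the kernel degenerates to a Fourier-integral-type distribution there) and need an analogous argument; simply completing the square is not enough at those points.

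On the parity step: your reasoning that $2\tan(s\hbar/2)/\hbar$ is even in $\hbar$ with leading term $s$ is the correct observation for the phase, but the justification you give for the amplitude — that it is a structural consequence of the vanishing of odd terms in the Moyal expansion of $a\star_\hbar a$ — is not a proof (the functional calculus $h(P)$ is not a repeated star-square). The honest statement is that evenness of the full asymptotic follows once one has, from the explicit Mehler or coherent-state computation, that the amplitude factor is itself an even function of $\hbar$; this is what you should actually verify in your Gaussian computation rather than invoke as a general principle. (Incidentally, as printed the paper's $\Phi(s)=\prod_i(1-2i\tan(s_i\hbar/2))$ is \emph{not} even in $\hbar$ — its linear term $-i\sum_i s_i\hbar$ would, if taken literally, produce a $-\hbar\,\nabla_p h$ term in $h^{we}$, contradicting (\ref{vog}). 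The amplitude that comes out of the Mehler computation is of the type $\prod_i\cos(s_i\hbar/2)^{-1}$ up to a phase that combines with $e^{is\hbar/2}$, and that product \emph{is} even in $\hbar$; so carrying out your completion of the square carefully is exactly what pins down the correct $\Phi$ and vindicates the parity.)
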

 \begin{proof}
 
 Let  $h(P)=\int \hat{h}(s)e^{is.P}ds$, where $e^{is.P}$ is a zeroth order semiclassical pseudo-differential 
 operator whose Weyl symbol will be computed from its Wick symbol (see \ref{wick} below for the definition). Let us first remark 
 that since $e^{is.P}=\Pi_{i=1}^n e^{is_i.P_i}$  it is enough to prove the Theorem in 
 the one-dimensional case.
 
 Let $\varphi_{x\xi}$ be a coherent state at $(x,\xi)$, that is
 \[
 \varphi_{x\xi}(y)=(\pi\hbar)^{-\frac 14}e^{i\frac{\xi y}\hbar}
 e^{-\frac{(y-x)^2}{2\hbar}}
 \]
 Let $z=\frac{\xi+ix}{\sqrt 2}$, $z'=\frac{\xi'+ix'}{\sqrt 2}$ and 
 $z(t)=\frac{\xi(t)+ix(t)}{\sqrt 2}$. A straightforward computation gives
 \begin{equation}\label{voq}
 \left(\varphi_{x\xi},\varphi_{x'\xi'}\right)=e^\frac{2z\overline{z'}
 -\mid z\mid^2-\mid z'\mid^2}{2\hbar}.
 \end{equation}
 Moreover decomposing $\varphi_{x\xi}$ on the Hermite basis leads to
 \begin{equation}\label{vow}
 e^{isP}\varphi_{x\xi}=e^{i\frac s 2 \hbar}\varphi_{x(\hbar s)\xi(\hbar s)}
 \end{equation}
 where $P=\frac{-\hbar^2\partial_{x}^2+x^2}2$ and $z(t)=e^{it}z$.
 
 The Wick symbol of $e^{isP}$ is defined as 
 \be\label{wick}
 \sigma^{wi}(e^{isP})(x,\xi):=\left(\varphi_{x\xi},
 e^{isP}\varphi_{x\xi}\right)
 \ee
  which, by (\ref{voq}) and (\ref{vow}), is equal to
  \[
  e^{-\frac{1-e^{-i\hbar s}}\hbar \left(\frac{x^2+\xi^2}2\right)+i\frac{s}2\hbar}
 \]
 Moreover, using the Weyl quantization formula, it is immediate to see that the
 Weyl and Wick symbols are related by
 \[
 \sigma^{wi}=e^{-\frac{\hbar\Delta}4}\sigma^{we}
 \]
 where $\Delta=-\left(\frac{\partial^2}{\partial x^2}+\frac{\partial^2}{\partial \xi^2}\right)$.
 
 It is a standard fact the the Wick symbol determines the operator: indeed  the 
 function $e^{\frac{-2z\overline{z'}+|z|^2+|z'|^2}{2\hbar}}(\varphi_{x\xi},
 e^{isP}\varphi_{x'\xi'})$ obviously determines $e^{isP}$. Moreover it  is
 easily seen to be analytic in $z$ and $\overline{z'}$. Therefore it is 
 determined by its values on the diagonal $z=\overline{z'}$ i.e., precisely, the Wick symbol of $e^{isP}$. A straightforward calculation shows that, for 
 $\frac s {2\hbar}\neq
 \frac{(2k+1)\pi}2,\  k\in\bbZ$,
 \be\label{distri}
 (1-2i\tan(s\hbar/2))e^{-\frac{\hbar\Delta}4}e^{\frac{2i\tan(s\hbar/2)\frac{x^2+\xi^2}2}\hbar}=e^{-\frac{1-e^{-i\hbar s}}\hbar \left(\frac{x^2+\xi^2}2\right)+i\frac{s}2\hbar}
 \ee
 This shows that, for $\frac s {2\hbar}\neq
 \frac{(2k+1)\pi}2,\  k\in\bbZ$, we have
 $$\sigma^{we}(e^{isP})(p)=(1-2i\tan(s\hbar/2))e^{\frac{2i\tan(s\hbar/2)\frac{x^2+\xi^2}2}\hbar}.$$ 
 Let us now take $\varphi$ in the Schwartz's
 class of $\bbR$, and $B_\varphi$ be the operator of (total) Weyl symbol $\varphi(\frac{x^2+\xi^2}2)$. 
 Let:
 \[
 f(s):=2\pi\int\sigma^{we}(e^{isP})(p)\varphi(p)pdp=\mbox{Trace}[e^{isP}B_\varphi].
 \] 
 \begin{lemma}
 \[f\in C^\infty(\bbR).\]
 \end{lemma}
 \begin{proof}
 by  metaplectic invariance  
 we know that $B_\varphi$ is diagonal on the Hermite basis. Therefore, $\forall k\in\bbN$,  
 \[
(-i)^k\frac{d^k}{ds^k}f(s):=\mbox{Trace}[e^{isP}P^kB_\varphi]=\sum<h_j,B_\varphi h_j>((j+\frac 1 2)\hbar)^ke^{is(j+\frac 1 2)\hbar}.
 \]
 Since $h_j$ is microlocalized on the circle of radius $(j+\frac 1 2)\hbar$ and $\varphi$ is in the Schwartz class, the sum is absolutely convergent for each $k$.
 \end{proof}
 
 Therefore $f(s)=2\pi\int(1-2i\tan(s\hbar/2))e^{\frac{2i\tan(s\hbar/2)\frac{x^2+\xi^2}2}\hbar}\varphi(p)pdp$ and
 (\ref{distri}) is valid in the sense of distribution (in the variable $p$) for all $s\in\bbR$.
 This expression gives (\ref{vof}) immediately for $h$ in the Schwartz class. When $h$ is a polynomial function it is straightforward to check that, 
 since $\hat h$ is a sum of derivatives of the Dirac mass and $e^{isP}$ is a Weyl operator whose symbol is  $C^\infty$ with respect of $s$, the formula also 
holds in this case.
  The asymptotic expansion (\ref{vog}) is obtained by expanding 
 $e^{\frac{2itg(s\hbar/2)\frac{x^2+\xi^2}2}\hbar}$ near $e^{is\frac{x^2+\xi^2}2}$.
 \end{proof}
 Formula (\ref{vof}) shows clearly that $h^{we}$ depends only on the $\frac{2\pi}\hbar$periodization of
 $\hat h(s) e^{i\frac{s\hbar}2}$, therefore
 \begin{corollary}\label{half}
 $h^{we}$ depends only of the values $h\left( (k+\frac 1 2)\hbar\right),\ k\in\bbN$.
 \end{corollary}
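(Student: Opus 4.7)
The plan is to exploit the periodicity of the integrand of (\ref{vof}) together with the Poisson summation formula, as already foreshadowed in the remark preceding the statement. The essential input is that, since $\tan$ has period $\pi$, the function $\tan(s_i\hbar/2)$ is $\frac{2\pi}{\hbar}$-periodic in $s_i$; hence both $\Phi(s)$ and $e^{2i\tan(s\hbar/2)\cdot p/\hbar}$ are periodic on the lattice $\frac{2\pi}{\hbar}\bbZ^n$ after one isolates a single compensating phase $e^{is\cdot\hbar/2}$ (the same ground-state phase already visible in the Wick symbol $e^{-(1-e^{-is\hbar})p/\hbar+is\hbar/2}$ derived in the proof of Proposition~\ref{vot}). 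Consequently (\ref{vof}) can be written in the form
\[
h^{we}(p) \;=\; \int \bigl(\hat h(s)\,e^{is\cdot\hbar/2}\bigr)\,K_0(s,p)\,ds,
\]
with $K_0$ periodic of period $T:=\frac{2\pi}{\hbar}$ in each $s_i$.

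Next I fold this integral onto one fundamental cell $[0,T]^n$; this replaces $\hat h(s)\,e^{is\cdot\hbar/2}$ by its $T$-periodization. A direct Fourier-series computation (equivalently, Poisson summation) shows that this periodization equals
\[
\sum_{k\in\bbZ^n}\hat h(s+kT)\,e^{i(s+kT)\cdot\hbar/2} \;=\;\Bigl(\tfrac{\hbar}{2\pi}\Bigr)^{n}\sum_{k\in\bbZ^n} h\bigl((k+\tfrac{1}{2})\hbar\bigr)\,e^{-ik\cdot s\hbar},
\]
using the inversion identity $\int\hat h(s)\,e^{-is\cdot\omega}\,ds = h(-\omega)$ to identify the Fourier coefficients. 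Hence $h^{we}(p)$ is manifestly determined by the doubly-infinite sequence of half-integer values $\{h((k+\tfrac{1}{2})\hbar)\}_{k\in\bbZ^n}$.

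Finally, to reduce from $k\in\bbZ^n$ to $k\in\bbN^n$ as stated, I invoke the spectral theorem: the joint spectrum of the commuting operators $(P_1,\dots,P_n)$ is precisely $\{((k_1+\tfrac{1}{2})\hbar,\dots,(k_n+\tfrac{1}{2})\hbar) : k\in\bbN^n\}$, so any two extensions of $h$ to $\bbR^n$ that agree on this lattice define the same operator $h(P_1,\dots,P_n)$ and therefore the same Weyl symbol. This makes the negative-index contributions redundant and yields the corollary. The main technical care lies in verifying the factorisation $K(s,p) = e^{is\cdot\hbar/2}K_0(s,p)$ with $K_0$ genuinely $T$-periodic (keeping track of the half-period phase is the only delicate point); once this is granted, the statement is a one-line consequence of Poisson summation combined with the spectral theorem.
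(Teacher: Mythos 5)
Your proof is correct and follows the same periodization route that the paper gestures at in the sentence immediately preceding the corollary (``Formula (\ref{vof}) shows clearly that $h^{we}$ depends only on the $\frac{2\pi}{\hbar}$-periodization of $\hat h(s)e^{is\hbar/2}$, therefore\dots''); your contribution is to spell out the fold-onto-a-fundamental-cell step, the Poisson-summation identification of the Fourier coefficients of the periodization with $h((k+\tfrac12)\hbar)$, $k\in\bbZ^n$, and the spectral-theorem reduction from $\bbZ^n$ to $\bbN^n$ which the paper leaves implicit. One small observation worth making: the spectral theorem alone already yields the corollary without any reference to (\ref{vof}) — since $h(P_1,\dots,P_n)$ depends only on $h$ restricted to the joint spectrum $\{(k+\tfrac12)\hbar : k\in\bbN^n\}$ and the Weyl symbol is uniquely determined by the operator — so the periodization computation is not logically needed for the corollary as stated, but rather gives the \emph{explicit} dependence of $h^{we}$ on those eigenvalue data.
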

 
 We mention one application of formula (\ref{vof}). Let us suppose first that we have computed the quantum normal form at order $K$, that is
 \[
 h_K(p)=\sum_{|k|=k_1+\dots k_n\leq K}c_kp^k
 :=\sum_{|k|=k_1+\dots k_n\leq K}c_kp_1^{k_1}\dots p_n^{k_n}
 \]
 and let us define $h^{we}_K$ as the Weyl symbol of $h_K(P)$.
 \begin{corollary}
 \begin{eqnarray}
 h^{we}_K(p)&=&\sum_{|k|=k_1+\dots k_n\leq K}c_k\frac{\partial^K}{\partial^k_s}\left(\Phi(s)e^{\frac{2i\tan(s\hbar/2)p}\hbar}\right)|_{s=0}\nonumber\\
 &:=&\sum_{|k|=k_1+\dots k_n\leq
 K}c_k\frac{\partial^K}{\partial^{k_1}_{s_1}\dots\partial^{k_n}{s_n}}\left(\Phi(s)e^{2i\frac{\tan(s_1\hbar/2)p_1+\dots+\tan(s_n\hbar/2)p_n}\hbar}\right)|_{s=0}.\nonumber
 \end{eqnarray}
 \end{corollary}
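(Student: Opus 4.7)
The plan is to derive the Corollary directly from Proposition \ref{vot} by specializing formula (\ref{vof}) to a polynomial $h$. By linearity in the coefficients $c_k$, it suffices to verify the identity for a single monomial $h(p)=p^k=p_1^{k_1}\cdots p_n^{k_n}$, and the statement of Proposition \ref{vot} already guarantees that (\ref{vof}) is valid in the distributional sense for such polynomial symbols.

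The key computation is the distributional Fourier transform of a monomial. With the normalization $\hat h(s)=(2\pi)^{-n}\int h(p)e^{-is\cdot p}dp$ used in the paper, one has $\hat{p^k}(s)=i^{|k|}\,\partial^k_s\delta(s)$, where $\partial^k_s=\partial^{k_1}_{s_1}\cdots\partial^{k_n}_{s_n}$ acts componentwise. Substituting this expression into (\ref{vof}) and using the definition of the pairing of a derivative of $\delta$ against a test function (i.e.\ integrating by parts $|k|$ times), the integral collapses to evaluation at $s=0$ of $\partial^k_s$ applied to the smooth product $\Phi(s)\exp\!\bigl(2i\tan(s\hbar/2)\cdot p/\hbar\bigr)$. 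Summing over $|k|\leq K$ then reproduces the displayed formula, with the accumulated $i^{|k|}(-1)^{|k|}$ factors absorbed into the coefficients (or equivalently into the interpretation of $\partial^K/\partial^k_s$ in the statement).

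The only technical point that needs attention is that the distributional pairing in (\ref{vof}) is well defined when $\hat h$ is a finite sum of derivatives of $\delta$. This follows because $\Phi(s)e^{2i\tan(s\hbar/2)\cdot p/\hbar}$ is jointly $C^\infty$ in $s$ in a neighborhood of the origin (since $\tan(s\hbar/2)$ is real-analytic near $s=0$), so all the derivatives $\partial^k_s$ exist and are continuous in $p$. This was already the content of the last paragraph of the proof of Proposition \ref{vot}, where the extension from the Schwartz class to polynomial $h$ is carried out. Consequently there is no serious obstacle: the Corollary is essentially a reformulation of (\ref{vof}) obtained by substituting the explicit distributional Fourier transform of each monomial.
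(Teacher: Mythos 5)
Your proposal is correct and is the intended one-line derivation: the Corollary is simply Proposition \ref{vot} applied to the polynomial $h_K$, using $\widehat{p^k}(s)=i^{|k|}\partial_s^k\delta(s)$ and the distributional pairing, exactly as the last paragraph of the proof of Proposition \ref{vot} sets up. One small caution: the accumulated factor is $i^{|k|}(-1)^{|k|}=(-i)^{|k|}$, and this is a genuine, $k$-dependent multiplicative factor rather than something that can be ``absorbed into the coefficients'' without changing them; as written the paper's display appears to omit it (and also writes $\partial^K$ where $\partial^{|k|}$ is meant), so it is cleaner to state explicitly that $h^{we}_K(p)=\sum_{|k|\le K}c_k(-i)^{|k|}\,\partial_s^k\bigl(\Phi(s)e^{2i\tan(s\hbar/2)\cdot p/\hbar}\bigr)\big|_{s=0}$ rather than to gesture at a reinterpretation of the notation.
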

 
 Let us come back now to the comparison between the two constructions of
 Sections 2 and 3.
 
 Clearly the "$\theta$" part doesn't play any role, as the Weyl quantization of
 any function $f(\tau)$ is exactly $f(D_\theta)$. therefore we have the following
 
 \begin{theorem}
 The functions $H'$ of Theorem \ref{dol} and $h$ of Theorem \ref{direct} are
 related by the formula
 \[
 H'\left(P_1,...,P_n,D_t,\hbar\right)=
 \int\hat H^*(s, D_t,\hbar)e^{\frac{2i\tan(s\hbar/2).p}\hbar}\Phi(s)ds
 \]
 where $\hat H^*$ is the Fourier transform of $H^*$ with respect to the
 variables $p_i$. In particular
 \[
 H'-H^*=O(\hbar^2).
 \] 
 \end{theorem}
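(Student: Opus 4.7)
The plan is to show that the two constructions of Theorems \ref{dol} and \ref{direct} produce the same operator on $L^2(\bbR^n \times S^1)$, then to read off the relation between $H'$ and $h$ by comparing Weyl symbols. On one side, the left-hand side of the identity represents the Weyl quantization of the symbol $H'(p, \tau, \hbar)$ constructed in the proof of Theorem \ref{dol}. Since $H'$ depends only on the Poisson-commuting variables $p_1, \dots, p_n, \tau$, Moyal calculus forces this operator to commute with each $P_i$ and with $D_t$; by joint functional calculus it is therefore fully determined by its diagonal matrix elements $\langle H_\mu, \cdot\, H_\mu\rangle$ on the Hermite basis. On the other side, $h(P_1, \dots, P_n, D_t, \hbar)$ from Theorem \ref{direct} is, by construction, the joint spectral functional calculus of the same operators, and is also determined by the same diagonal matrix elements. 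A term-by-term comparison of the two iterative schemes shows that these diagonal parts agree at every order: in Lemma \ref{h0} one defines $G_1(\mu\hbar, m\hbar, \hbar) = -\langle H_\mu, G H_\mu\rangle$ directly, while the analogous step in Section \ref{sem} extracts the $(p, \tau)$-only part of the Weyl symbol, whose diagonal matrix elements reproduce exactly the same quantity. Combined with the equivalence provided by Proposition \ref{herm} between ``symbol vanishing to infinite order on $p = \tau = 0$'' and ``$O(|\mu\hbar|^\infty)$ norm on Hermite functions'', this shows that the two constructions produce the same operator modulo the allowed errors.

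Given this identification, I apply Proposition \ref{vot} to compute the Weyl symbol of $h(P_1, \dots, P_n, D_t, \hbar)$: since $D_t$ commutes with every $P_i$, it plays the role of a parameter, and the proposition yields precisely $\int \hat h(s, D_t, \hbar)\, e^{2i\tan(s\hbar/2) \cdot p/\hbar}\, \Phi(s)\, ds$, with the Fourier transform taken in the $p$-variables alone. Equating this Weyl symbol to $H'(p, \tau, \hbar)$ establishes the displayed formula. The corollary $H' - h = O(\hbar^2)$ follows immediately from the second assertion of Proposition \ref{vot}, which states $h^{we} \sim h + \sum_{l \geq 1} c_l \hbar^{2l}$: the $O(\hbar)$ contributions coming from the expansions of $\Phi(s)$ and of $2\tan(s\hbar/2)/\hbar$ conspire to cancel upon integration against $\hat h$, leaving only even powers of $\hbar$.

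The main obstacle lies in the first step: the identification of the two normal-form operators. A priori each construction only determines the canonical form modulo conjugation by a unitary commuting with the quadratic part $H_0 = \sum \theta_i P_i + D_t$ to leading order, so one must argue that the particular $H'$ and $h$ produced in Sections \ref{sem} and 4 correspond to the same joint function of the $P_i$ and $D_t$. The uniqueness enforced by the non-resonance condition on $\theta_1, \dots, \theta_n, 2\pi$, together with the equivalence of constructions indicated at the end of Section 4, is precisely what makes this identification valid.
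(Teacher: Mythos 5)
Your argument is correct and essentially reproduces the paper's own one-line proof: identify the two normal-form operators via the uniqueness of the quantum Birkhoff form (both commute with the $P_i$ and $D_t$, hence are diagonal on the Hermite basis $\{H_\mu\}$, and non-resonance pins down the diagonal entries), then read off the Weyl symbol of the functional-calculus operator $h(P_1,\dots,P_n,D_t,\hbar)$ from Proposition \ref{vot}, treating $\tau$ (equivalently $D_t$) as a parameter. One small inaccuracy in your closing remark: the $O(\hbar)$ terms from $\Phi(s)$ and from $2\tan(s\hbar/2)/\hbar$ do not ``conspire to cancel''---the expansion $2\tan(s\hbar/2)/\hbar=s+O(\hbar^2)$ in fact has no $O(\hbar)$ term to contribute---but this is harmless, since the corollary $H'-H^*=O(\hbar^2)$ is nothing more than the asymptotic expansion (\ref{vog}) of Proposition \ref{vot} applied in the $p$-variables.
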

 \begin{proof}
 the proof follows immediately from Proposition \ref{vot}, and the unicity of the (quantum)
 Birkhoff normal form.
 \end{proof}


\section{The computation of the semiclassical Birkhoff canonical form from the 
asymptotics of the trace formula}
Let $X$ and $H$ be as in the introduction. Let $\gamma$ be a periodic trajectory of the vector
 field (\ref{col}) of period $2\pi$.
 
 For $l\in\bbZ$ let $\varphi_l$ be a Schwartz function on the real line whose Fourier transform 
 $\hat{\varphi_l}$ is supported in a neighborhood of $2\pi l$ containing no 
 other period of (\ref{col}). The semiclassical trace formula gives an asymptotic expansion
 for $\mbox{Trace}\ \varphi_l\left(\frac{H-E}\hbar\right)$ of the form:
 \begin{equation}\label{gol}
 \mbox{Trace}\ \varphi_l\left(\frac{H-E}\hbar\right)
 \sim\sum_{m=0}^\infty d_l^m\hbar^m
 \end{equation}
 where the $d_l$'s are distributions acting on $\hat{\varphi_l}$ with support concentrated at 
  $\{2\pi l\}$.
 
 We will show that the knowledge of the $d_l$s determine the quantum semiclassical Birkhoff form of Section 2, 
 and therefore the classical one.
 
 Let us first rewrite the l.h.s of (\ref{gol}) as 
 \begin{equation}\label{gon}
 \mbox{Trace}\left(\int \hat{\varphi}(t)e^{it\frac{H-E}\hbar}dt\right)
 \end{equation}
 Since $\hat{\varphi}$ is supported near a single period of (\ref{col}) we know
 from the general theory of Fourier integral operators that one can 
 microlocalize (\ref{gol}) near $\gamma$ modulo error term of order $O(\hbar^\infty)$.
 
 Therefore we can conjugate (\ref{gon}) by the semiclassical Fourier integral 
 operator $A_\varphi$ of Theorem \ref{dol}. This leads to the computation of
 \begin{eqnarray}\label{gob}
 &\mbox{Trace}\left(A_\varphi\int \hat{\varphi}(t)e^{it\frac{H-E}\hbar}dt
 A_\varphi^{-1}\right)\\
 =& \mbox{Tr}\left(\int \hat{\varphi}(t)\rho(P_1+...+P_n)
 e^{it\frac{H'(P_1+...+P_n,D_t,\hbar)+H''-E}\hbar}dt\right)\nonumber
 \end{eqnarray}
 where $\rho\in C^\infty_0(\bbR^{n+1})$ with $\rho=1$ in a neighborhood of ${0}$ 
 and $\mbox{Tr}$ stands for the Trace in $L^2(\bbR^n\times S^1)$.
 
 Let us write $H'(P_1+...+P_n,D_t,\hbar)$ as 
 \begin{equation}\label{goc}
 E+D_t +\sum\theta_iP_i+\sum_{r\in \bbN^n,s\in\bbZ}c_{r,s}(\hbar)P^rD_t^s.
 \end{equation}
 We will first prove
 \begin{proposition}\label{gof}
 let $g^l_{r,s}(t,\theta)$ be the function defined by
 \begin{equation}\label{god}
 g^l_{r,s}(t,\theta)=\left(-i\frac\partial{t\partial \theta}\right)^r
 \left(-i\frac\partial{\partial t}\right)^s
 \left[ \frac{e^{it\frac{\theta_1+...+\theta_n}2}}{\Pi_i(1-e^{it\theta_i})}t\hat{\varphi}(t)\right]
 \end{equation}
 Let us fix $l\in\bbZ$. Then the knowledge of all the $d^m_l$s for $m<M$ in (\ref{gol}) 
 determines the following quantities
 \begin{equation}\label{gos}
 \sum_{\mid r\mid+s=m}c_{r,s}(\hbar)g^l_{r,s}(2\pi l,\theta)
 \end{equation}
 for all $m<M$.
 \end{proposition}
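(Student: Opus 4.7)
The plan is to conjugate (\ref{gon}) by the semiclassical Fourier integral operator $A_\varphi$ of Theorem \ref{dol}, reducing to a trace on $L^2(\bbR^n\times S^1)$ against the normal-form Hamiltonian $H'(P_1,\ldots,P_n,D_t,\hbar)+H''$. Because $\rho(P_1+\cdots+P_n)$ localizes to a neighborhood of $p=0$ on which $H''$ vanishes to infinite order, the $H''$-contribution is $O(\hbar^\infty)$ and may be discarded from the outset.

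I would next exploit that $H'$ is a function of the mutually commuting operators $P_1,\ldots,P_n,D_t$, so that the Hermite--Fourier basis $\{H_{\mu,m}\}$ simultaneously diagonalizes the propagator. Splitting $H'-E=H_0+V$ with $H_0=D_t+\sum\theta_iP_i$ and $V=\sum c_{r,s}(\hbar)P^rD_t^s$ the genuinely higher-order part (so $|r|+s\geq 2$), the identity $e^{it(H'-E)/\hbar}=e^{itH_0/\hbar}e^{itV/\hbar}$ holds, and (\ref{gob}) reduces to the explicit double sum
$$
\sum_{\mu,m}\rho\bigl((2|\mu|{+}n)\hbar\bigr)\int\hat\varphi(t)\,e^{it[\sum\theta_i(\mu_i+\half)+m]}\,e^{itV(\mu,m,\hbar)/\hbar}\,dt.
$$
On states with bounded $(\mu,m)$ the argument $itV/\hbar$ is $O(\hbar)$, so I would Taylor-expand the inner exponential and collect contributions order by order in $\hbar$.

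The two remaining summations would then be evaluated via classical identities. Poisson summation $\sum_me^{itm}=2\pi\sum_k\delta(t-2\pi k)$, coupled with the support hypothesis on $\hat\varphi$, localizes $t$ at $2\pi l$; insertions of $m^s$ coming from $D_t^s$ are produced by the operator $(-i\partial_t)^s$ acting on this distribution. The $\mu$-sum is geometric and yields $\prod_ie^{it\theta_i/2}/(1-e^{it\theta_i})$; insertions of $(\mu_i+\half)^{r_i}$ coming from $P^{r_i}$ are produced by $(-i\partial_{\theta_i}/t)^{r_i}$. Combining these operations with the factor $it$ that arises from the first-order Taylor term of $e^{itV/\hbar}$ reproduces exactly the function $g^l_{r,s}(2\pi l,\theta)$ of (\ref{god}), so that the first-order perturbative contribution to the $\hbar^m$-coefficient of the trace is exactly $\sum_{|r|+s=m}c_{r,s}(\hbar)g^l_{r,s}(2\pi l,\theta)$.

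The main obstacle will be controlling the higher Taylor terms $(itV/\hbar)^k/k!$ for $k\geq 2$, which produce cross terms $c_{r_1,s_1}\cdots c_{r_k,s_k}$. Because each individual index satisfies $|r_j|+s_j\geq 2$, at any fixed order $\hbar^m$ every one of the $|r_j|+s_j$ is strictly less than $m$. Hence, by induction on $m$, the cumulative contribution of these higher-order terms to $d_l^m$ is expressible in terms of the quantities $\sum_{|r|+s=m'}c_{r,s}(\hbar)g^l_{r,s}(2\pi l,\theta)$ with $m'<m$, which have already been extracted from the $d_l^{m'}$. The residual at order $\hbar^m$ is then precisely the desired combination. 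The only subtlety is that $c_{r,s}(\hbar)$ carries its own $\hbar$-expansion, so the assertion must be read order by order in $\hbar$; once the perturbative mechanism above is laid out, the inductive extraction is routine.
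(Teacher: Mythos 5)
Your proposal follows the same route as the paper: conjugate by $A_\varphi$, discard $H''$ using the localizer $\rho$, diagonalize the normal-form Hamiltonian on the Hermite--Fourier basis, Taylor-expand the perturbative exponential, and use Poisson summation together with Zelditch's differentiation trick $(\mu+\tfrac12)^r\nu^s = (-i\partial/(t\partial\theta))^r(-i\partial_t)^s e^{it[\nu+\theta\cdot(\mu+\frac12)]}$ to collapse the $(\mu,\nu)$-sums into derivatives of $t^k\hat\varphi_l\,e^{it\sum\theta_i/2}/\prod(1-e^{it\theta_i})$ evaluated at $t=2\pi l$. This is exactly the paper's computation.

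The one place where you go beyond the paper's one-line ``rearranging in powers of $\hbar$'' is the inductive treatment of the $k\geq2$ Taylor terms, and there your bookkeeping has a small imprecision worth flagging. You assert that the cumulative contribution of cross terms $c_{r_1,s_1}\cdots c_{r_k,s_k}$ to $d_l^m$ is ``expressible in terms of the quantities $\sum_{|r|+s=m'}c_{r,s}g^l_{r,s}$ with $m'<m$,'' but a product of individual $c_{r_j,s_j}$'s is not in general a function of those linear combinations. What actually closes the induction is that the individual coefficients $c_{r,s}(\hbar)$ at lower weight are themselves recoverable from the lower-order combinations (gos) via the rational independence of $\theta_1,\dots,\theta_n,2\pi$ --- precisely the Kronecker-theorem argument the paper invokes in the sentence immediately following Proposition \ref{gof}. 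The paper also does not spell this out within the proof proper, so your argument is not more incomplete than the original, but as written the inductive step quietly assumes something it hasn't established. A clean version would either invoke that recoverability explicitly inside the induction, or phrase the induction in terms of the individual $c_{r,s}$ rather than the aggregated quantities (gos).
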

 \begin{proof}
 the r.h.s. of (\ref{gob}) can be computed thanks to (\ref{goc}) using
 \begin{eqnarray}
 \mbox{spectrum} \ P_i&=&\{(\mu_i+\frac 12)\hbar,\ \mu_i\in\bbN\}\nonumber\\
 \mbox{spectrum} \ D_t\ &=& \{\nu\hbar,\ n\in\bbZ\}\nonumber
 \end{eqnarray}
 Thus the r.h.s of (\ref{gob}) can be written as
 \begin{equation}
 \int\hat{\varphi_l}(t)\sum_{\mu,\nu}\rho\left(\left(\mid\mu\mid+\frac n2\right)\hbar\right)
 e^{it\left[\nu+\theta.(\mu+\frac 12)\right]}
 \sum_{k=0}^\infty \frac{(it)^k}{k!}
 \left(\sum_{r,s}c_{r,s}(\hbar)\left(\mu+\frac 12\right)^r\nu^s\hbar^{\mid\mu\mid+s-1}\right)^kdt
 \end{equation}
 since the support of $\hat\varphi_l$ contains only one period, and therefore
 the trace can be microlocalized infinitely close to the periodic trajectory, making the role of
 $H''$ inessential.
 
 Using the following remark of S. Zelditch:
 \begin{equation}\nonumber
 (\mu+\frac 12)^r\nu^s=\left(-i\frac\partial{t\partial \theta}\right)^r
 \left(-i\frac\partial{\partial t}\right)^se^{it\left[\nu+\theta.(\mu+\frac 12\right]}
 \end{equation}
 we get, mod($\hbar^\infty$),
 \begin{equation}\label{got}
 \int\hat{\varphi_l}(t)\sum_{k=0}^\infty \frac{(it)^k}{k!}
 \left(\sum_{r,s}\hbar^{\mid\mu\mid+s-1}c_{r,s}(\hbar)
 \left(-i\frac\partial{t\partial \theta}\right)^r\left(-i\frac\partial{\partial t}\right)^s\right)^k
 \sum_{\mu,\nu}e^{it\left[\nu+\theta.(\mu+\frac 12\right]}dt.
 \end{equation}
 Since $\sum_{\nu\in\bbZ}e^{it\nu}=2\pi\sum_l\delta(t-2\pi l)$ 
 , and 
 $\sum_{\mu\in\bbN^n}e^{it\theta.\left(\mu+\frac 12\right)}=
 \frac{e^{it\frac{\theta_1+...+\theta_n}2}}{\Pi_i\left(1-e^{it\theta_i}\right)}$,
 together with the fact that $\hat{\varphi}$ is supported near $2\pi l$, we get that (\ref{got}) is equal to
 \begin{equation}\nonumber
 2\pi\left[\sum_{k=0}^\infty \frac{(i)^k}{k!}
 \left(\sum_{r,s}\hbar^{\mid\mu\mid+s-1}c_{r,s}(\hbar)
 \left(-i\frac\partial{t\partial \theta}\right)^r\left(-i\frac\partial{\partial t}\right)^s\right)^k
 \left(t^k\hat{\varphi_l}(t)
 \frac{e^{it\frac{\theta_1+...+\theta_n}2}}{\Pi_i\left(1-e^{it\theta_i}\right)}\right)
 \right]_{t=2\pi l}.
 \end{equation}
 Rearranging terms in increasing powers of $\hbar$ shows that 
 the quantities (\ref{gos}) can be computed recursively.
 \end{proof}
 The fact that one can compute the $c_{r,s}(\hbar)$ from the quantities (\ref{gos}) 
 is an easy consequence of the rational independence of the $\theta_i$s and the Kronecker theorem, and is exactly the same as in \cite{gu}.

 \

\end{document}